\documentclass[a4paper, pdftex,pra,aps,twocolumn,twoside,nofootinbib, accepted=2021-11-07]{quantumarticle}
\pdfoutput=1
\raggedbottom   

\usepackage[numbers,sort&compress]{natbib}

\usepackage[T1]{fontenc}
\usepackage[utf8]{inputenc}
\usepackage[colorlinks=true,linkcolor=blue,citecolor=red,plainpages=false,pdfpagelabels]{hyperref}
\usepackage{float}
\usepackage{braket}
\usepackage{indentfirst}
\usepackage{enumerate}
\usepackage{url}
\usepackage[dvipsnames]{xcolor}
\usepackage[stable]{footmisc}
\usepackage{fancyhdr}
\usepackage[all]{xy}
\usepackage{graphicx}
\usepackage{adjustbox}
\usepackage{tabularx}
\usepackage{youngtab}
\usepackage{mathtools}
\usepackage{enumerate}
\usepackage{textcomp}
\usepackage{amsmath,amsfonts,amssymb,amsthm}
\usepackage[mathscr]{eucal}
\usepackage{tcolorbox}
\usepackage{tabularx}

\usepackage{makecell}
\usepackage{makecell}
\usepackage[T1]{fontenc}
\usepackage[utf8]{inputenc}
\usepackage[normalem]{ulem}
\usepackage{amssymb}
\usepackage{hyperref}
\usepackage{placeins}
\usepackage{float}
\usepackage{indentfirst}
\usepackage{enumerate}
\usepackage{url}
\usepackage[dvipsnames]{xcolor}
\usepackage[stable]{footmisc}
\usepackage{fancyhdr}
\usepackage{appendix}
\usepackage{graphicx}
\usepackage[all]{xy}
\usepackage{graphicx}
\usepackage{tabularx}

\usepackage{cancel}
\usepackage{mathtools}
\usepackage{enumerate}
\usepackage{textcomp}
\usepackage{amsmath,amsfonts,amssymb,amsthm}
\usepackage[mathscr]{eucal}
\usepackage{wrapfig}
\theoremstyle{plain}
\newtheorem{theorem}{Theorem}
\newtheorem{lemma}[theorem]{Lemma}
\newtheorem{proposition}[theorem]{Proposition}
\newtheorem{corollary}[theorem]{Corollary}

\newtheorem{definition}[theorem]{Definition}

\newtheorem{notation}[theorem]{Notation}

\newtheoremstyle{note}{\topsep}{\topsep}{\slshape}{}{\scshape}{}{ }{}
\theoremstyle{note}

\newtheorem{example}[theorem]{Example}
\newtheorem*{theorem*}{Theorem}

\newcommand\tr{\operatorname{Tr}}



%

%
%

%
%

%
%

%
%

%

\newcommand{\<}{\langle}
\renewcommand{\>}{\rangle}


\newcommand\be{\begin{equation}}
\newcommand\ee{\end{equation}}
\newcommand\bea{\begin{array}}
	\newcommand\eea{\end{array}}
\newcommand\ben{\begin{eqnarray}}
\newcommand\een{\end{eqnarray}}
\newcommand\ot{\otimes}

\newcommand\bei{\begin{itemize}}
	\newcommand\eei{\end{itemize}}
\newcommand\bee{\begin{enumerate}}
	\newcommand\eee{\end{enumerate}}

\DeclarePairedDelimiter\floor{\lfloor}{\rfloor}

\begin{document}
	\title{Multiport based teleportation - transmission of a large amount of quantum information
	}
	
	\author{Piotr Kopszak$^{1}$, Marek Mozrzymas$^{1}$, Micha{\l} Studzi\'nski$^{2}$ and Micha{\l} Horodecki$^{3}$}
	\affiliation{$^1$ Institute for Theoretical Physics, University of Wrocław
		50-204 Wrocław, Poland \\
		$^2$ Institute of Theoretical Physics and Astrophysics and National Quantum Information Centre in Gda{\'n}sk,
		Faculty of Mathematics, Physics and Informatics, University of Gda{\'n}sk, 80-952 Gda{\'n}sk, Poland\\
		$^{3}$ International Centre for Theory of Quantum Technologies, University of Gda{\'n}sk, 80-952, Poland
}
	\begin{abstract}
	We analyse the problem of transmitting a number of unknown quantum states or one composite system in one go. We derive a lower bound on the performance of such process, measured in the entanglement fidelity. The obtained bound is effectively computable and outperforms the explicit values of the entanglement fidelity calculated for the pre-existing variants of the port-based protocols, allowing for teleportation of a much larger amount of quantum information. The comparison with the exact formulas and similar analysis for the probabilistic scheme is also discussed. In particular, we present the closed-form expressions for the entanglement fidelity and for the probability of success in the probabilistic scheme in the qubit case in the picture of the spin angular momentum.
	\end{abstract}
 	\maketitle	
\section{Introduction}
\label{intro}
In 2008 the novel port-based teleportation protocol (PBT) has been proposed~\cite{ishizaka_asymptotic_2008,ishizaka_quantum_2009}. In contrast to the very first teleportation procedure, discovered in~\cite{bennett_teleporting_1993}, it does not require a correction on the receiver's side depending on the classical outcome of the sender's measurement, see Figure~\ref{FPBTa}.
The lack of the correction led to various new applications, where the ordinary teleportation fails, like for example NISQ protocols~\cite{ishizaka_asymptotic_2008,Banchi2020}, position-based cryptography~\cite{beigi_konig}, fundamental limitations on quantum channels discrimination~\cite{limit}, connection between non-locality and complexity~\cite{buhrman_quantum_2016}, and many other important results~\cite{Ebler,PhysRevLett.123.210502,Stroing,sim,PhysRevA.59.156,jeong2020generalization}. 

The huge advantage of the lack of receiver's correction comes at a price. 
Due to no-programming theorem \cite{Nielsen1997} the ideal transmission in such scheme is possible only when parties exploit an infinite number of maximally entangled pairs.
\begin{figure}[h]
	\begin{centering}
		\includegraphics[width=0.3\textwidth]{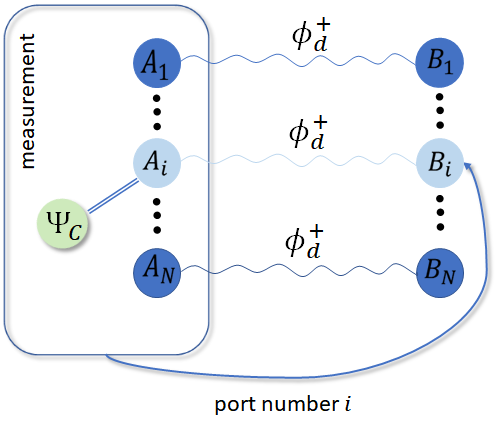}
		\caption{
		The standard configuration for the PBT scheme: two parties share $N$ copies of the maximally entangled state, called port, $\phi_d^+=|\phi^+_d\>\<\phi^+_d|$, where $|\phi^+_d\>=(1/\sqrt{d})\sum_i |ii\>$. Alice (sender) to send an unknown state $\Psi_C$ to Bob (receiver), performs a global measurement (POVM) on the states and her half of the maximally entangled pairs. As an output she gets a classical output $1\leq i\leq N$ indicating ports on the Bob's side where the state arrives. To recover the state Bob has only to pick up the right port, according to the classical message $i$ obtained from Alice,  no further correction is needed. We have also the optimised version of PBT, where Alice optimises jointly measurements and the shared states with Bob before she runs the protocol.}
		\label{FPBTa}%
	\end{centering}
\end{figure}
Accordingly, we distinguish {\it deterministic scenario}  
where teleportation  is imperfect and the state after teleportation is distorted   
and  \textit{probabilistic scenario}
where teleportation is perfect but one has to accept the non-zero failure probability of the whole process. In the first case, to learn about the efficiency we ask how the fidelity of the teleported particle depends on the number of shared entangled pairs, while in the latter we ask about the similar dependence for probability of success of perfect transmission. We can calculate the performance of the \textit{non-optimal PBT}, where parties share maximally entangled pairs, and \textit{optimised PBT}, where Alice optimises jointly over the shared state and measurements before she runs the protocol, see Figure~\ref{FPBTa}.

We rigorously address here a fundamental question of transmission capability of PBT raised firstly in~\cite{strelchuk_generalized_2013}.  Namely we ask: {\it how many qubits can one faithfully teleport for a given number of ports?}

To perform teleportation of a state of $k$ multiparty systems  one can: use the PBT with port dimension large enough, apply the PBT several times (sequential PBT), divide total number of ports into smaller packages and send every particle via such subsystem (packaged PBT), 
or finally apply {\it multi-port based  teleportation} (MPBT) suggested in~\cite{strelchuk_generalized_2013}. In particular, one could also run the recycling protocol for PBT suggested in~\cite{strelchuk_generalized_2013}, however due to recent results derived in~\cite{rec_new} the efficiency of such scheme is an open problem.
In the first case, increasing dimensionality of the port significantly reduces the entanglement fidelity~\cite{ishizaka_quantum_2009,Studzinski2017,StuNJP,MozJPA}, requiring increasing the number of used ports to compensate it~\cite{christ2018asymptotic}. In  sequential PBT and recycling protocol, we have to resign from single-shot scenario and after each round one needs to store the transmitted state, or we have to use distorted resource state, affecting the performance.
In MPBT we allow to teleport a quantum multiparty state  in one go, where each subsystem  of the teleported multiparty system ends up in one of Bob's ports, pointed by Alice's message (see Fig. \ref{FPBT}).  
The protocol thus enjoys quite a mild correction: Bob has to permute ports according to Alice's message.

The main purpose of this paper is to evaluate and compare the transmission capability of PBT and MPBT schemes. More precisely, we consider teleportation of $\sim N^\alpha$ qubits (or more generally qudits)
through $N$ ports, where $N$ grows to infinity, and analyse the quality of transmission, i.e. we ask about fidelity or probability of success. Note, in contrast with the traditional approach to 
channel capacity, where the central quantity is the asymptotic rate $k/N$, here the main objective becomes to identify  asymptotic exponents $\alpha$ for which the transmission is possible, since the mentioned rate vanishes for port based teleportation.

In any variant of either of PBT and MPBT we obtain critical-like behaviour of the quality of transmission. We identify  the critical values  $\alpha_{\rm cr}$ of exponents $\alpha$ for several variants, both for exact asymptotic values of a figure of merit and their lower bounds. Whenever the value of $\alpha$ is below the critical value $\alpha_{\rm cr}$ the values of fidelity (or probability of success, depending on the scheme) describing transmission are 1, and 0 otherwise.

We obtain {\it qualitative} difference between deterministic scheme and probabilistic one. Namely, in the deterministic scheme, even the optimal PBT scheme is outperformed by non-optimal MPBT (i.e. one based on shared maximally entangled pairs). In particular, we argue that in non-optimal deterministic MPBT, one can teleport a much larger amount of quantum information, i.e. with $\alpha_{\rm cr}=1$, in comparison to optimal port-based teleportation, where $\alpha_{\rm cr}=2/3$, and one can teleport faithfully only up to $o(N^{2/3})$ qubits.

Unlike in deterministic variant in probabilistic non-optimal MPBT the scaling is the same as for probabilistic optimal PBT, allowing for teleportation $o(N^{\alpha})$ qubits with $\alpha_{\rm cr}=1/2$. However, considering an optimal version of probabilistic MPBT, one can transmit $o(N^{\alpha})$ with a critical exponent equal to 1, clearly outperforming PBT variants.

To achieve our results, we first provide a lower bound on fidelity in deterministic non-optimal MPBT based on the state discrimination problem. The new bound is effectively computable and depends only on global parameters like the number of ports, their dimension, and the number of teleported particles. The first bound of such kind has been discussed in~\cite{strelchuk_generalized_2013} in not fully rigorous way, suffering from some flaws discussed later in this paper.
In the qubit case, starting from group-theoretical results for exact values of entanglement fidelity and probability of success in non-optimal MPBT, presented in companion paper~\cite{Stu2020}, we deliver exact and effectively computable expressions for these quantities in qubit case, phrased in appealing form  of spin angular momentum.
Next, result regarding asymptotic behaviour of non-optimal probabilistic case, have been obtained by combining advanced  tools from statistical analysis, in particular  non-straightforwardly the celebrated Berry-Essen theorem~\cite{endriu,essen0} with  direct estimates of binomial expressions by their Gaussian approximations. For optimal probabilistic MPBT we use for our analysis the exact formula for probability of success  in such optimal MPBT derived in~\cite{2020OPT}.
In the Table~\ref{tab:schemes} we summarise the already mentioned variations of the architecture of PBT protocols, which are studied in this paper and are presented in more detail in the following sections.
\begin{table}[t]
	\centering
	\begin{tabularx}{0.5\textwidth}{>{\hsize=0.75\hsize}X >{\hsize=0.95\hsize}X >{\hsize=1.55\hsize}X | >{\hsize=0.75\hsize}X}
		Acronym & Name & Description & Reference\\
		\hline\hline
		$(O)PBT$ & (Optimal) port-based teleportation & Teleportation of one qudit using $N$ entangled pairs, possibly with optimization over states and POVMs employed (optimal protocol) & qbit case: 
		\cite{ishizaka_asymptotic_2008, ishizaka_quantum_2009}, qdit case:\cite{Studzinski2017, StuNJP}\\
		\hline
		$(O)MPBT$  & (Optimal) multi port-based teleportation & The teleportation of $k$ qudits in one go using $N$ entangled-pairs, includes the correction procedure in the last step, namely the permutation of the ports, see Fig.~\ref{FPBT} & this paper; companion papers \cite{Stu2020, 2020OPT}\\
		\hline
		\makecell[tl]{\textit{Pack.}\\\textit{(O)PBT}} & Packaged (optimal) port-based teleportation & Teleportation of $k$ qdits by using $k$ (O)PBT schemes consisting of $N/k$ entangled-pairs, see Fig.~\ref{fig:packaged} & \cite{strelchuk_generalized_2013}\\
		\hline
	\end{tabularx}
		\caption{Summary of different variations of Port Based Teleportation protocols discussed in this work. In each of them both deterministic and probabilistic scheme can be realised by choosing the appropriate measurement.}
	\label{tab:schemes}
\end{table}
\section{The Multi-port-based Teleportation}
 \label{multPBT}
In order  to transmit $k$-system state $\Psi_C=\Psi_{C_1C_2\dots C_k}$ (see Fig.~\ref{FPBT}) Alice applies a global measurement $\Pi_{\mathbf{i}}^{AC}$, where $A=A_1\dots A_N$, on her halves of state $\Phi_{AB}^+= \bigotimes_{j=1}^N |\phi^+_d\>\<\phi^+_d|_{A_jB_j}$ and the state to be  teleported.
As an output she receives a tuple of indices 
$\mathbf{i}=\{i_{1},i_{2},\ldots,i_{k}\}$ 
and sends it to Bob through a classical channel.   We denote the set  of all outputs $\mathbf{i}$ by $\mathcal{I}$. Note that the number of outputs $k!\binom{N}{k}$ grows polynomially in $N$ at fixed $k$. The meaning of  indices $(i_1, \ldots ,i_k)$  is that they point the ports on the receiver's side on which the teleported systems appear, i.e. the first system arrives at port $i_1$, the second at $i_2$ and so on. Bob recovers the initial form of the teleported state 
by suitably permuting his systems. 
 \begin{figure}[h]
 	\begin{centering}
 		\includegraphics[width=0.35\textwidth]{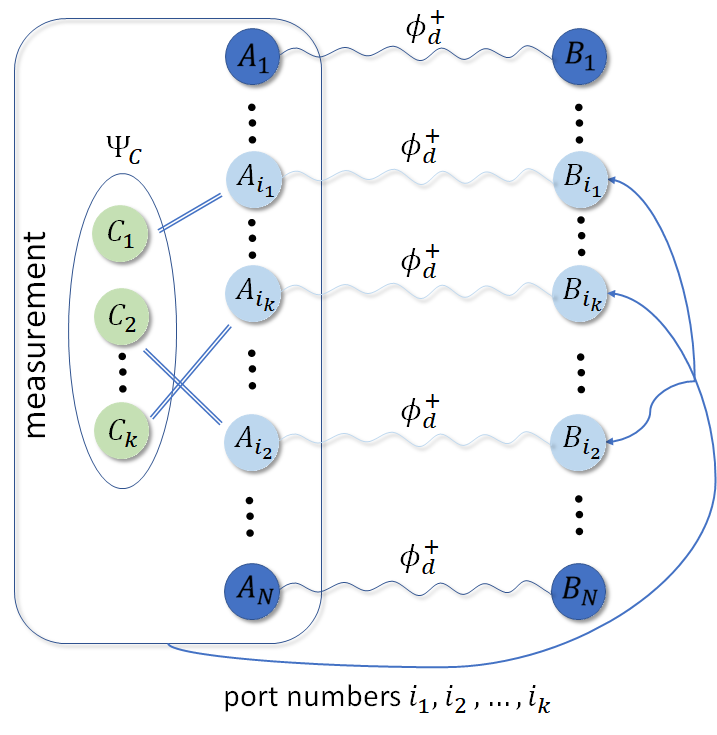}
 		\caption{ In MPBT two parties share $N$ copies of the maximally entangled state (port) $\phi_d^+=|\phi^+_d\>\<\phi^+_d|$, where $|\phi^+_d\>=(1/\sqrt{d})\sum_i |ii\>$. Alice to transmit a multipartite state $\theta_C$ of $k$ systems
 		 performs a global measurement (POVM) on the state $\Psi_C$ and her halves of the states $\phi^+_d$. As an output she gets a string $\mathbf{i}=\{i_{1},i_{2},\ldots,i_{k}\}$ indicating ports on the Bob's side where the states arrive. To recover the state Bob has to pick up pointed ports in the right order, which is equivalent to permuting the ports according to message $\mathbf{i}$. Similarly as it is for PBT here Alice can also jointly optimise over shared states and measurements getting optimal MPBT protocol.}
 		\label{FPBT}%
 	\end{centering}
 \end{figure}
The resulting action of the teleportation channel $\mathcal{N}$ is of the form
 \begin{eqnarray}
 \label{ch1}
	 &&\mathcal{N}\left(\Psi_{C} \right)=\sum_{\mathbf{i}\in \mathcal{I}}\tr_{A\bar{B}_{\mathbf{i}}C}\left[ \sqrt{\Pi_{\mathbf{i}}^{AC}}\left(\Phi_{AB}^+\ot \Psi_{C} \right)\sqrt{\Pi_{\mathbf{i}}^{AC}}^{\dagger}\right] \nonumber \\
	&&=\sum_{\mathbf{i}\in \mathcal{I}} \tr_{AC}\left[\Pi_{\mathbf{i}}^{AC}\left( \sigma_{\mathbf{i}}^{AB}  \ot \Psi_{C}\right)\right],
 \end{eqnarray}
where the bar in $\bar{B}_{\mathbf{i}}$ denotes discarded  subsystems except those on positions 
$i_{1},i_{2},\ldots,i_{k}$.
The states $\sigma_{\mathbf{i}}^{AB}$ or shortly $\sigma_{\mathbf{i}}$ for $\mathbf{i}\in \mathcal{I}$, called later the {\it signals}, are given as
\begin{equation}
\label{sigma}
\begin{split}
\sigma_{\mathbf{i}}^{AB}&\equiv \tr_{\bar{B}_{\mathbf{i}}}\Phi_{AB}^+=\frac{1}{d^{N-k}}\mathbf{1}_{\bar{A}_{\mathbf{i}}}\ot \phi^+_{A_{\mathbf{i}}B_{\mathbf{i}}},
\end{split}
\end{equation}
where $\phi^+_{A_{\mathbf{i}}B_{\mathbf{i}}}=|\phi_d^+\>\<\phi_d^+|_{A_{\mathbf{i}}B_{\mathbf{i}}}$.
To examine the efficiency we compute the entanglement fidelity which reports how well parties can transmit quantum correlations by sending $k$ halves of maximally entangled states $\Phi^+_{CD}=\phi_{C_1D_1}^+\ot \phi_{C_2D_2}^+\ot \cdots \ot \phi_{C_kD_k}^+$: 
\begin{equation}
\label{ent_fid}
\begin{split}
	F&=\tr\left[\Phi^+_{BD}(\mathcal{N}\ot \mathbf{1}_{D})\left(\Phi^+_{CD}\right) \right]=\frac{1}{d^{2k}}\sum_{\mathbf{i}\in\mathcal{I}}\tr\left[\Pi_{\mathbf{i}}^{AB} \sigma_{\mathbf{i}}^{AB}\right].
\end{split}
\end{equation}
In the probabilistic scheme the teleportation channel looks exactly as in~\eqref{ch1}, but with different form of measurements. The channel is now trace non-preserving, since there is a POVM, denoted as $\Pi_{\mathbf{0}^{AB}}$, corresponding to the failure of the whole process. The average probability of success  reads~\cite{ishizaka_asymptotic_2008,Stu2020}:
\begin{equation}
\label{pav}
p_{succ}=\frac{1}{d^{N+k}}\sum_{\mathbf{i}\in\mathcal{I}}\tr\left[\Pi_{\mathbf{i}}^{AB}\right].
\end{equation}
The goal here is to find the set of optimal measurements, maximising the probability of success. This is done in~\cite{Stu2020}  by exploiting symmetries exhibit in the problem and methods from semidefinite programming. 
Our considerations look similar to the original PBT scheme, but incorporating $k$ particles into the teleportation process  makes it structurally entirely different, see~\cite{Stu2020}.
 The structure of the signals $\sigma_{\mathbf{i}}^{AB}$
  and the suitable measurements $\Pi_{\mathbf{i}}^{AB}$, makes the problem of evaluating $F$ and $p_{succ}$ very hard and technical. 
  Therefore, in the next section, we present an effectively computable lower bound on the entanglement fidelity. 
 
\section{Fidelity bound from the state discrimination task}
\label{Fbound}
Evaluation of fidelity in the standard PBT was a formidable task, requiring machinery of representation theory of $SU(2)^{\ot N}$ for qubits, and much more advanced representation theoretic tools for $d>2$. 
The resulting formulas are usually not very transparent, expressed in terms of complicated sums of representation theory parameters, for which no explicit expressions are known beyond qubits. 

The first attempt to derive the efficiency of deterministic MPBT has been presented in~\cite{strelchuk_generalized_2013}.  
The combinatorial argumentation used by the authors was incorrect, neglecting deeper complexity of the problem.  However,  one can save part of argumentation and 
show that the bound given in 
\cite{strelchuk_generalized_2013}
is indeed a legitimate lower bound for fidelity,
see discussion after example \ref{ex:contr} in Appendix~\ref{AppB}.
Here, we go further and give a simple and stronger lower bound for fidelity of the protocol. 
The starting point is the idea presented~\cite{ishizaka_asymptotic_2008,ishizaka_quantum_2009}, exploited in
\cite{beigi_konig,strelchuk_generalized_2013},  of providing lower bound for the standard PBT protocol 
by relating fidelity of teleportation to probability of success in state discrimination.  Let us emphasize, that generalizing this approach to the  multiport scheme case 
requires solving a complex combinatorial problem, contained in Lemma \ref{square}.

The relation between the entanglement fidelity and the averaged probability  $p_{dist}$ of distinguishing the signals $\sigma_{\mathbf{i}}$ with equal prior probability $1/(k!\binom{N}{k})$ for arbitrary measurements $\Pi_{\mathbf{i}}$ is given by (see \cite{beigi_konig}):
\begin{equation}
\label{2eqs}
F=\frac{k!\binom{N}{k}}{d^{2k}}p_{dist},\qquad p_{dist}=\frac{1}{k!\binom{N}{k}}\sum_{\mathbf{i}\in\mathcal{I}}\tr(\Pi_{\mathbf{i}}\sigma_{\mathbf{i}}).
\end{equation}
Now, our goal is to provide effectively computable lower bound on $p_{dist}$, which gives us the lower bound on $F$  of our protocol.
Since we are interested in any feasible lower bound, for any $k\geq 1$, in~\eqref{2eqs} we take  square-root measurements (SRM) of the form:
\begin{equation}
\label{srm}
\Pi_{\mathbf{i}}=\rho^{-1/2}\sigma_{\mathbf{i}}\rho^{-1/2} \quad \text{with} \quad \rho=\sum_{\mathbf{i}\in\mathcal{I}}\sigma_i.
\end{equation}
The support of square-root measurements is always restricted to the support of the signals $\sigma_{\mathbf{i}}$~\cite{ishizaka_asymptotic_2008,Stu2020} making explicit calculations difficult. 
 However, by formulating the generalisation of the statement of Lemma A.3 from~\cite{beigi_konig} (see also Appendix~\ref{AppB} for details), we get a general lower bound on $p_{dist}$:
\begin{equation}
\label{bounda}
p_{dist} \geq \frac{1}{d^{N-k}\tr \overline{\rho}^2},
\end{equation}
where $\overline{\rho}$ is a normalised version of the operator $\rho$ from~\eqref{srm}. 
To use~\eqref{bounda} one has to evaluate $\tr \overline{\rho}^2$.  The result (Appendix~\ref{AppB}) is the following:
\begin{lemma}
\label{square}
For the  operator $\overline{\rho}=\rho/\tr \rho $, we have
\begin{equation}
\tr(\overline{\rho}^2)=d^{-N-k}\binom{N}{k}^{-1}\binom{d^2+N-1}{k}.
\end{equation}
\end{lemma}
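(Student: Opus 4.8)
The plan is to compute $\tr(\overline{\rho}^2)=\tr(\rho^2)/(\tr\rho)^2$ and evaluate the two traces separately. The denominator is immediate: each signal $\sigma_{\mathbf i}$ is a genuine density operator (a reduced state of a normalised maximally entangled resource), so $\tr\sigma_{\mathbf i}=1$ and $\tr\rho=\lvert\mathcal I\rvert=k!\binom Nk$. All the work is therefore in the numerator $\tr(\rho^2)=\sum_{\mathbf i,\mathbf j\in\mathcal I}\tr(\sigma_{\mathbf i}\sigma_{\mathbf j})$, and the stated formula is equivalent to $\tr(\rho^2)=(k!)^2\binom Nk\binom{d^2+N-1}{k}d^{-N-k}$.

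Next I would evaluate a single overlap $\tr(\sigma_{\mathbf i}\sigma_{\mathbf j})$ by writing each signal in the computational basis and contracting the resulting tensor network. Each maximally entangled factor $\phi^+$ carries a $1/d$ together with two Kronecker deltas, each identity factor a single delta; tracing the product collapses $\tr(\sigma_{\mathbf i}\sigma_{\mathbf j})$ to $d^{-2N}$ times $d^{C}$, where $C$ counts the independent index loops. These loops are governed by the overlap multigraph $G(\mathbf i,\mathbf j)$ on the ports, carrying one edge $\{i_m,j_m\}$ per slot $m$; since every port has degree at most two, $G$ is a disjoint union of paths and cycles, the cycles living inside $S\cap T$, where $S,T$ are the port-sets of $\mathbf i,\mathbf j$. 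Bookkeeping the contributions—inactive ports, path components (whose bra and ket copies get glued at the degree-one endpoints) and cycle components (which stay split)—yields, after a pleasant cancellation of the $\lvert S\cap T\rvert$-dependence, the clean value $\tr(\sigma_{\mathbf i}\sigma_{\mathbf j})=d^{-N-k+2b(\mathbf i,\mathbf j)}$, where $b(\mathbf i,\mathbf j)$ is the number of cycles of $G$. Hence $\tr(\rho^2)=d^{-N-k}\sum_{\mathbf i,\mathbf j}q^{\,b(\mathbf i,\mathbf j)}$ with $q=d^2$.

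The combinatorial heart—and the step I expect to be the main obstacle—is the evaluation of $\Sigma:=\sum_{\mathbf i,\mathbf j}q^{\,b}$. Relabelling the ports shows $\sum_{\mathbf j}q^{\,b(\mathbf i,\mathbf j)}$ is independent of $\mathbf i$, so I fix $\mathbf i=(1,\dots,k)$ and aim to prove $\sum_{\mathbf j\colon[k]\hookrightarrow[N]}q^{\,b(\mathbf j)}=\prod_{m=1}^{k}(q+N-m)$; then $\Sigma=\lvert\mathcal I\rvert$ times this product, which is exactly $(k!)^2\binom Nk\binom{d^2+N-1}{k}$. For this choice of $\mathbf i$, $b(\mathbf j)$ is the number of cycles of the partial injection of $[k]$ induced by $\mathbf j$; such a partial injection splits into cycle components (necessarily contained in $[k]$) and path components, the latter being the slots whose orbit escapes into the $N-k$ ``extra'' ports. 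Classifying $\mathbf j$ by its partial-injection type rewrites the sum as $\sum_{g}q^{\operatorname{cyc}(g)}(N-k)^{\underline{p(g)}}$ over partial injections $g$ of $[k]$, where $p(g)$ counts path components and the falling factorial counts the injective assignments of the path-ends to the extra ports.

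Finally I would establish the polynomial identity $\sum_{g}q^{\operatorname{cyc}(g)}n^{\underline{p(g)}}=(q+n)^{\overline k}$ in the two independent indeterminates $q,n$ by the exponential formula: a cycle on $c$ labelled points admits $(c-1)!$ arrangements with weight $q$, giving the component EGF $-q\log(1-x)$, while a path on $\ell$ points admits $\ell!$ arrangements carrying a marker $u$, giving $u\,x/(1-x)$; exponentiating and then replacing $u^p\mapsto n^{\underline p}$ through $\sum_p\binom np y^p=(1+y)^n$ collapses the bivariate generating function to $(1-x)^{-q}(1-x)^{-n}=(1-x)^{-(q+n)}$, whose coefficients are precisely $(q+n)^{\overline k}/k!$. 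Because the identity is proved with $n$ free, I may then substitute the honest values $q=d^2$ and $n=N-k$, turning $(q+n)^{\overline k}$ into $\prod_{m=1}^{k}(q+N-m)$. Assembling $\Sigma$ and dividing by $(\tr\rho)^2$ gives the claimed $\tr(\overline{\rho}^2)$. I anticipate that the loop-counting cancellation and, above all, this cycle-weighted enumeration are the delicate parts, while the remaining manipulations are routine.
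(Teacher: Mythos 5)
Your proposal is correct, and it reaches Lemma~\ref{square} by a genuinely different route from the paper's. Both arguments share the unavoidable skeleton---evaluate the pairwise overlaps $\tr(\sigma_{\mathbf{i}}\sigma_{\mathbf{j}})$, then carry out the double sum---but each step is executed differently. For the overlap, the paper works algebraically inside the algebra $\mathcal{A}_n^{(k)}(d)$ of partially transposed permutation operators, deriving composition rules through the nested transpositions of Notation~\ref{notation} (Lemmas~\ref{L4}--\ref{lem8}) and expressing the trace in Corollary~\ref{cor9} as a product of Kronecker deltas; your tensor-network contraction gives instead $\tr(\sigma_{\mathbf{i}}\sigma_{\mathbf{j}})=d^{-N-k+2b(\mathbf{i},\mathbf{j})}$ with $b$ the number of cycles of the overlap multigraph. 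The two expressions coincide (the number of matched deltas equals the number of cycles; I checked your formula against Corollary~\ref{cor9}, against the worked case $n=6$, $k=2$ of Example~\ref{ex:contr}, and against $\mathbf{i}=\mathbf{j}$), and your contraction is essentially the direct evaluation the paper itself mentions in passing in Example~\ref{ex:contr}, namely $\tr A^\Gamma B^\Gamma=\tr AB$ plus the rule that a permutation operator has trace $d^c$ with $c$ the number of cycles---but the paper uses that observation only to exhibit the flaw in the earlier literature, not to build its proof. The real divergence is in the summation, which the paper flags as the hard combinatorial step: there, the distinctness-propagation Lemma~\ref{L15} lets the $2k$-fold sum be peeled off index by index, each inner sum hitting the matched value exactly once and contributing one factor at a time, as in the proof of Theorem~\ref{thm8}; you instead classify $\mathbf{j}$ globally by its partial-injection type and reduce everything to the bivariate identity $\sum_{g}q^{\operatorname{cyc}(g)}\,n^{\underline{p(g)}}=(q+n)(q+n+1)\cdots(q+n+k-1)$, which you prove by the exponential formula (I verified it for $k=1,2$, and the EGF argument is sound). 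Your final assembly indeed yields $\tr(\overline{\rho}^{2})=d^{-N-k}\binom{N}{k}^{-1}\binom{d^{2}+N-1}{k}$, in agreement with Theorem~\ref{thm8} after normalisation; note only that your $\tr\rho=k!\binom{N}{k}$ differs from the paper's $d^{N}k!\binom{N}{k}$ because the paper drops the $1/d^{N}$ from each signal, which is immaterial for $\overline{\rho}$. As for what each route buys: yours isolates a clean, self-contained enumerative identity that explains structurally where the factor $\binom{d^{2}+N-1}{k}$ comes from (one factor of $q=d^{2}$ per closed cycle) and needs no operator algebra beyond the first contraction; the paper's stays within the operator-algebraic framework it develops anyway for its other results, and its sequential summation via Lemma~\ref{L15} requires nothing beyond elementary counting once the composition rules are in place.
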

Now we are in position to provide our lower bound for teleportation fidelity.
\begin{theorem}
	\label{bound_thmF}
The entanglement fidelity $F$ in MPBT	deterministic scheme, with $N$ ports of dimension $d$ each, while teleporting $k\leq \floor{N/2}$ particles satisfies
\begin{equation}
\label{f2}
\begin{split}
F\geq\binom{N}{k}\binom{d^{2}+N-1%
}{k}^{-1}
\geq
 \left( 1-\frac{d^{2}-1}{d^{2}+N-k}\right) ^{k}.
\end{split}
\end{equation}

For fixed $k$ the fidelity scales at least as $1 - O(1/N)$ and for $N\to \infty$ the fidelity goes to $1$. 
\end{theorem}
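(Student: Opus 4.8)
The plan is to chain together the three facts already established in this section---the fidelity/discrimination identity \eqref{2eqs}, the square-root-measurement lower bound \eqref{bounda}, and the evaluation of $\tr\overline{\rho}^2$ supplied by Lemma \ref{square}---and then to reduce the resulting ratio of binomial coefficients to the product form on the right of \eqref{f2}. First I would feed the bound \eqref{bounda} into $F=\tfrac{k!\binom{N}{k}}{d^{2k}}p_{dist}$. The prior normalisation $k!\binom{N}{k}$ multiplying $p_{dist}$ is matched by the same factor governing the discrimination problem, so it cancels, and the powers of $d$ collect as $d^{2k}\cdot d^{N-k}=d^{N+k}$; the chain then collapses to the compact estimate $F\geq 1/(d^{N+k}\tr\overline{\rho}^{2})$. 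Inserting $\tr\overline{\rho}^{2}=d^{-N-k}\binom{N}{k}^{-1}\binom{d^{2}+N-1}{k}$ from Lemma \ref{square}, the factor $d^{-N-k}$ cancels $d^{N+k}$ exactly and leaves precisely the first inequality $F\geq\binom{N}{k}\binom{d^{2}+N-1}{k}^{-1}$.

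For the second inequality I would expand both binomials as falling factorials and write their quotient as a product of $k$ elementary factors, the $k!$'s cancelling:
\begin{equation}
\frac{\binom{N}{k}}{\binom{d^{2}+N-1}{k}}
=\prod_{j=0}^{k-1}\frac{N-j}{d^{2}+N-1-j}
=\prod_{j=0}^{k-1}\left(1-\frac{d^{2}-1}{d^{2}+N-1-j}\right).
\end{equation}
Since $d^{2}+N-1-j\geq d^{2}+N-k$ for every $0\leq j\leq k-1$, each subtracted fraction is at most $\tfrac{d^{2}-1}{d^{2}+N-k}$, so each factor is bounded below by $1-\tfrac{d^{2}-1}{d^{2}+N-k}$. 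The hypothesis $k\leq\floor{N/2}$ ensures $N\geq k$, hence $d^{2}+N-k>d^{2}-1$ and every factor is positive; replacing each factor by the smallest one then yields the claimed bound $\left(1-\tfrac{d^{2}-1}{d^{2}+N-k}\right)^{k}$. The closing asymptotic statement follows by fixing $k$ and expanding $\left(1-\tfrac{d^{2}-1}{d^{2}+N-k}\right)^{k}=1-k\tfrac{d^{2}-1}{d^{2}+N-k}+O(N^{-2})=1-O(1/N)$, which tends to $1$ as $N\to\infty$.

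A remark on where the genuine difficulty sits. Within the theorem itself the manipulations are routine once the inputs are granted; the real content lies in those inputs. The step I expect to be delicate is the justification of \eqref{bounda} in the multiport setting: this is the generalisation of the Beigi--K\"onig Lemma A.3, and it is the place where one must control that the square-root measurement \eqref{srm}, whose support is confined to the joint support of the signals $\sigma_{\mathbf{i}}$, still delivers the stated bound after the $k!\binom{N}{k}$ prior has been accounted for. The evaluation $\tr\overline{\rho}^{2}$ in Lemma \ref{square} rests on the same combinatorial analysis flagged earlier in the text; taking both as given, the only care needed in the present argument is the bookkeeping of the powers of $d$ and the positivity of the factors in the product bound.
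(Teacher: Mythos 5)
Your proof is correct and takes essentially the same route as the paper's: the identity \eqref{2eqs} combined with the generalized Beigi--K\"onig bound collapses to $F\geq 1/(d^{N+k}\tr\overline{\rho}^{2})$, Lemma~\ref{square} then yields the first inequality, and your factor-by-factor estimate of $\prod_{j=0}^{k-1}\bigl(1-\tfrac{d^{2}-1}{d^{2}+N-1-j}\bigr)$ is exactly the paper's derivation of the second inequality, with the closing asymptotics handled the same way (Bernoulli-type expansion). One remark in your favor: you implicitly used the correct (appendix) form of the discrimination bound, in which the prior $k!\binom{N}{k}$ appears in the denominator and cancels against the prefactor in \eqref{2eqs}; the main-text display \eqref{bounda} omits that factor, and taken literally it would give an absurdly large bound, so your reading is the right one.
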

 The first bound in~\eqref{f2} follows from Lemma~\ref{square} and expression~\eqref{2eqs}.
 The proof of second estimate, being  technical, we delegate to 
 Appendix~\ref{AppB}. The last sentence of the Theorem  is obtained by applying the Bernoulli inequality $(1-x)^k\geq 1-k x$ (valid for $x<1$)
to the second bound. In particular, for qubits it reads:
\begin{equation}
\label{reduced_bound}
F\geq 1-\frac{3k}{4+N-k}.
\end{equation}
For $k=1$, the bound in~\eqref{f2} reduces to the bound for the standard non-optimal PBT 
\cite{beigi_konig}, namely $F\geq 1-\frac{d^2-1}{d^2+N-1}$. 

Now we compare performance of MPBT with the port based teleportation protocols by considering packaged PBT~\cite{strelchuk_generalized_2013}. We shall compare MPBT protocols only with the mentioned versions of PBT, since the versions with large port dimension $d^k$ become ineffective very fast~\cite{christ2018asymptotic} and fidelity drops substantially. For our considerations we use both the original, i.e. non-optimal PBT protocol (Figure~\ref{FPBTa}), where Alice and Bob share $N$ maximally entangled states, as well as the {\it optimized} version of PBT scheme (OPBT), where Alice optimises over shared state and measurements. Let us emphasize that in our scheme we do not perform any optimisation as it is done in OPBT case and, moreover, we compare with the lower bound in Theorem~\ref{bound_thmF}.

Let us denote by $F(N,k)$ the fidelity of teleportation, regardless of the scheme, of $k$ qubit systems through $N$ ports.

\textit{Packaged PBT} In this version Alice to transmit $k-$system state to Bob has to divide all the ports in her possession in $N/k$ packages (see Figure~\ref{fig:packaged}). Then she runs independently $k$ separate PBT protocols with $N/k$ ports each. The total fidelity $F_{pack}(N,k)$ in the packaged version of PBT equals
\begin{equation}
\label{packaged}
F_{pack}(N,k):=F(N/k,1)^k.
\end{equation}
Please notice that optimal MPBT protocol is at least good as packaged PBT, since optimisation over the resource state includes all packaged schemes. Here however, we compare packaged versions of PBT with lower bound from Theorem~\ref{bound_thmF}, so it could happen in principle that the lower bound performs worse than packaged PBT. 

In paper~\cite{strelchuk_generalized_2013} it was shown that in case of qubits the quantity $F_{pack}(N,k)$ can be bounded from  below as
\begin{equation}
\label{pack_bound}
F_{pack}(N,k)=\left(1-\frac{3k}{4N}\right)^k\geq 1-\frac{3k^2}{4N}.
\end{equation}
However, it is easy to check that bound given through Theorem~\ref{bound_thmF} outperforms bound~\eqref{pack_bound} for $k\geq 4$. This motivates us to compare bound~\eqref{f2} with exact values of the packaged PBT given by~\eqref{packaged}.
\begin{figure}[h]
	\begin{centering}
		\includegraphics[width=0.3\textwidth]{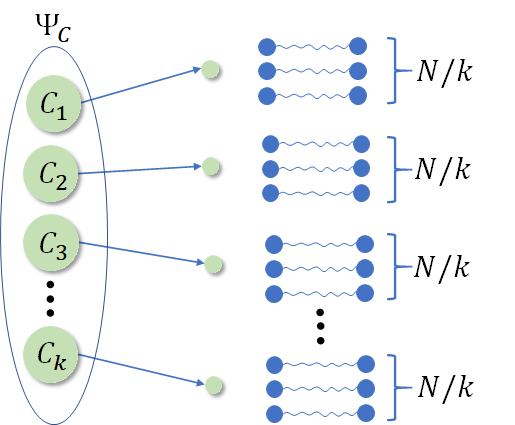}
		\caption{
		The packaged protocol for PBT. Alice to send a $k-$system state $\Psi_{C_1C_2\ldots C_k}$ to Bob divides all the ports into $N/k$ packages, and performs $k$ independent PBT protocols with $N/k$ ports each.}
		\label{fig:packaged}%
	\end{centering}
\end{figure}
We know from~\cite{ishizaka_quantum_2009}, that the fidelity for qubit optimal PBT is given by $F=\operatorname{cos}^2\left(\frac{\pi}{N+2}\right)$. Then in the packaged version entanglement fidelity reads 
\be 
\label{eq:pack_opbt}
F_{pack}^{OPBT}(N,k)=\operatorname{cos}^{2k}\left(\frac{\pi}{N/k+2}\right).
\ee

In Figure~\ref{FPBT2} we present comparison of the first bound from~\eqref{f2}, plotted for various $k$ versus the fidelity of the packaged OPBT~\eqref{eq:pack_opbt}.

In our comparisons we use only the lower bound for non-optimal MPBT from Theorem~\ref{bound_thmF}, getting regions for which we perform better. For the exact values we obviously perform even better. In Appendix~\ref{AppC} we present an explicit expression for the entanglement fidelity in the qubit case, using angular momentum representation, and we compare it with the second bound from Theorem~\ref{bound_thmF}.

\begin{figure}[h]
	\begin{centering}
		\includegraphics[width=0.45\textwidth]{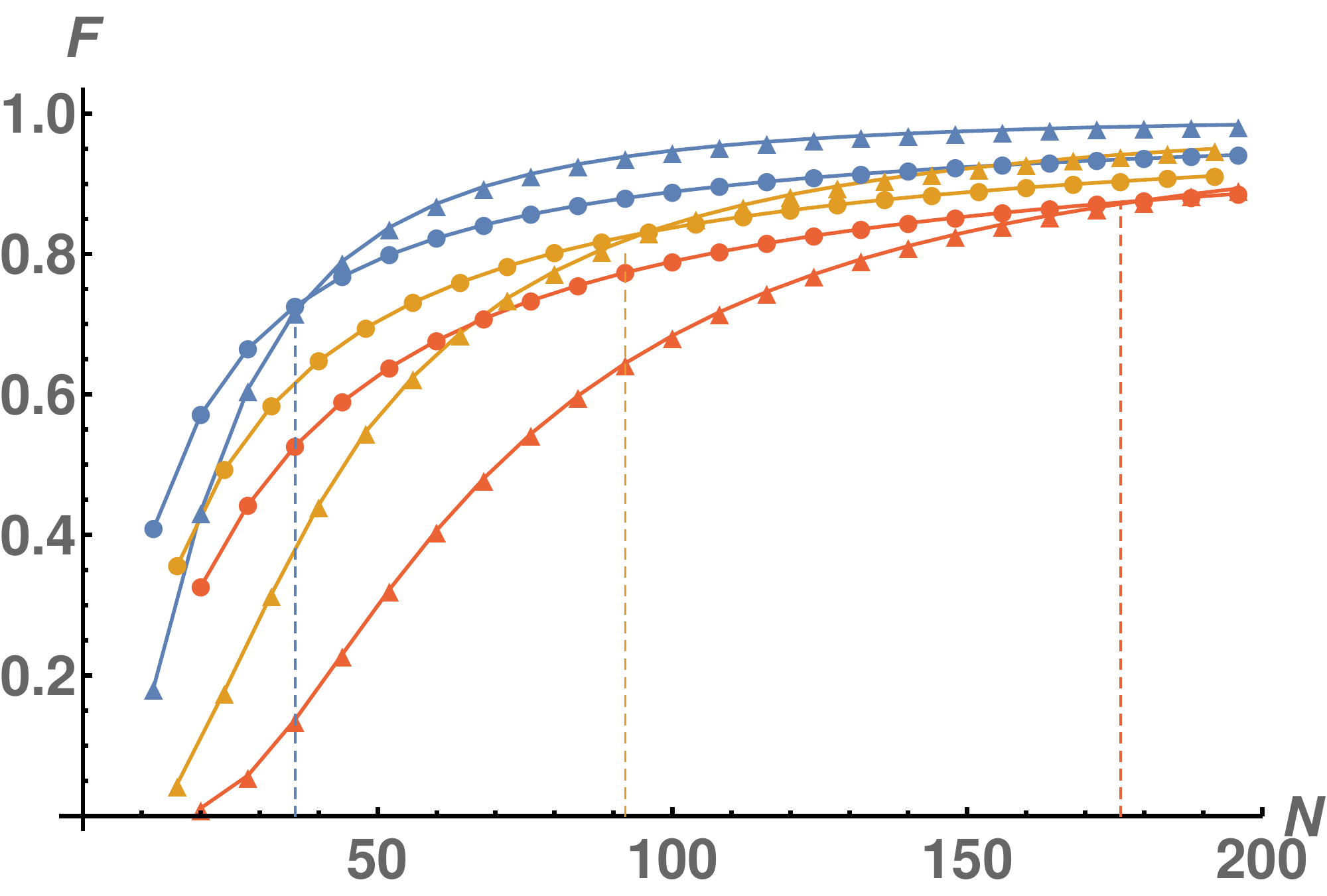} 
		\caption{Comparison of the first bound from~\eqref{f2} from Theorem~\ref{bound_thmF} (full circles, for $k=4$ marked with \protect\raisebox{-.5mm}{\protect\includegraphics[height=3mm]{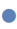}}, $k=6$ \protect\raisebox{-.5mm}{\protect\includegraphics[height=3mm]{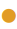}}, $k=8$		\protect\raisebox{-.5mm}{\protect\includegraphics[height=3mm]{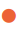}}) with exact values of the Pack. OPBT~\eqref{eq:pack_opbt} (triangles, for fixed $k=4$ \protect\raisebox{-.5mm}{\protect\includegraphics[height=4mm]{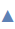}}, $k=6$ \protect\raisebox{-.5mm}{\protect\includegraphics[height=4mm]{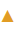}}, $k=8$ \protect\raisebox{-.5mm}{\protect\includegraphics[height=4mm]{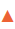}}).  For  each fixed $k=4,6,8$ there exists $N$ for which the packaged version of OPBT outperforms the bound. 
		We see, however, that for increasing number $k$ of teleported qubits, the region of supremacy of MPBT scheme increases quickly. }
		\label{FPBT2}%
	\end{centering}
\end{figure}
For fixed $k$ and small number of ports $N$ our first bound outperforms packaged OPBT from~\eqref{packaged}. However, for every fixed $k$ there is some point after which $F_{pack}^{OPBT}(N,k)$ dominates over the first (stronger) bound given by (\ref{f2}). Nevertheless, as $k$ increase this point shifts more rapidly. This suggests, that changing $k$ adaptively with growing $N$ shows the real advantage of the new protocol over the preexisting ones. We examine it in Section~\ref{adapt}.

\section{Multiport versus standard PBT: large number of teleported qubits}
\subsection{Deterministic protocol}
\label{adapt}
We  now compare performance of the MPBT protocol with the packaged (O)PBT ones in the asymptotic regime, when we change the number of teleported particles adaptively. Namely, we shall assume that the number of teleported particles scales as $k(N)=aN^{\alpha}$.  

For the case of MPBT we can formulate
\begin{proposition}
	\label{funN}
	When $k/N\to 0$,
	the fidelity approaches $1$.
	Moreover, for $k=\lfloor a N \rfloor $ 
	where $a<1$ we have  $\lim_{N\to \infty} F\geq
	   \operatorname{e}^{-(d^2-1)\frac {a}{1-a}}.$ 
\end{proposition}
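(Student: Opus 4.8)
The plan is to start from the second, cleaner lower bound already established in Theorem~\ref{bound_thmF}, namely
\begin{equation}
\label{eq:start_prop}
F\geq\left(1-\frac{d^{2}-1}{d^{2}+N-k}\right)^{k},
\end{equation}
and analyse its two limits. For the first claim, when $k/N\to 0$, I would apply the Bernoulli-type inequality $(1-x)^k\geq 1-kx$ already invoked in the text to \eqref{eq:start_prop}, obtaining $F\geq 1-k(d^2-1)/(d^2+N-k)$. Since $k/N\to 0$ forces $k/(d^2+N-k)\to 0$ as well, the subtracted term vanishes and $F\to 1$, with the upper bound $F\leq 1$ being automatic. This part is essentially immediate.

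For the second claim, set $k=\lfloor aN\rfloor$ with $a<1$ and take logarithms of \eqref{eq:start_prop}. First I would write $\log F\geq k\log\left(1-\frac{d^2-1}{d^2+N-k}\right)$ and track the asymptotics of the right-hand side as $N\to\infty$. The key observation is that $k=\lfloor aN\rfloor\sim aN$ and $N-k\sim(1-a)N$, so the fraction $\frac{d^2-1}{d^2+N-k}$ is of order $1/N$ and tends to $0$; hence the logarithm can be expanded as $\log(1-x)=-x+O(x^2)$ with $x=O(1/N)$. Multiplying by $k=O(N)$, the leading term is $-kx$ and the $O(x^2)$ correction contributes $k\cdot O(1/N^2)=O(1/N)\to 0$. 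Therefore
\begin{equation}
\lim_{N\to\infty}\log F\geq -\lim_{N\to\infty}\frac{k(d^2-1)}{d^2+N-k}=-(d^2-1)\,\frac{a}{1-a},
\end{equation}
where the final limit uses $k/(N-k)\to a/(1-a)$. Exponentiating yields the stated bound $\lim_{N\to\infty}F\geq\operatorname{e}^{-(d^2-1)a/(1-a)}$.

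The only point requiring genuine care, rather than routine estimation, is justifying that the second-order term in the logarithmic expansion really is negligible after multiplication by $k$: since $k$ grows linearly in $N$ while $x^2$ decays like $1/N^2$, the product is $O(1/N)$, but one should confirm that the implied constants are uniform in $N$ (they are, because $x\to 0$ monotonically for $k\leq\lfloor N/2\rfloor$, and here $a<1$ keeps $N-k$ positive and growing). A clean way to avoid any expansion subtlety altogether is to use the elementary two-sided bound $\log(1-x)\geq -x/(1-x)$ valid for $0\leq x<1$, giving $\log F\geq -\frac{kx}{1-x}$ directly; since $x\to 0$ the factor $1/(1-x)\to 1$ and one recovers the same limit without ever invoking a remainder term. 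I expect this substitution to be the tidiest route, turning the main obstacle into a one-line inequality.
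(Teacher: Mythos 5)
Your proof is correct and follows essentially the same route as the paper: the $k/N\to 0$ case via the Bernoulli inequality applied to the second bound of Theorem~\ref{bound_thmF}, and the $k=\lfloor aN\rfloor$ case via a limit computation of the bounds in \eqref{f2}. The only cosmetic difference is that the paper cites the first (binomial-ratio) bound for the linear case while you take logarithms of the second bound; since the second bound is precisely what yields the constant $\operatorname{e}^{-(d^2-1)a/(1-a)}$ in the limit (the first bound actually gives the slightly stronger $(1-a)^{d^2-1}$), your computation is, if anything, the more direct one.
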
 
 When $k=\lfloor a N \rfloor $ the statement of the Proposition follows directly from the calculation of the limit of the first bound given in~\eqref{f2}. 
The case $k=o(N)$  follows from applying Bernoulli inequality $(1-x)^k\geq 1-k x$ (valid for $x<1$)
to the second bound in~\eqref{f2}.


The relation between $\alpha$ and the asymptotic fidelity of PBT and OPBT protocols is as follows. For the non-optimal PBT protocol the fidelity fulfils $F\approx1-\frac{3}{4N}=1-O(1/N)$~\cite{ishizaka_quantum_2009}.
Thus, we can easily calculate $\lim_{N\rightarrow\infty}F^{PBT}_{pack}(N,k)$ for the non-optimal packaged PBT. It turns out that depending on whether $\alpha$ is greater, smaller or equal to particular value $\alpha_{cr}$  (which in this case equals $1/2$) the asymptotic fidelity equals $0, 1$ or a fixed value $F_{cr}$. 

For $F_{pack}^{OPBT}(N,k)$ we can apply exactly the same reasoning, since in that case $F=\operatorname{cos}^2\left(\frac{\pi}{N+2}\right)\approx 1-{\pi^2}/{N^2}$ \cite{ishizaka_quantum_2009}.

The results for all three schemes are presented in the Table~\ref{tab:fid_alpha} and the convergence for OPBT and MPBT protocols is depicted in the Figure~\ref{fig:asym_f}.

\begin{table}[]
   \centering
    \resizebox{8.5cm}{!}{%
    \begin{tabular}{c|c|l|c}  & $F=0$ & $\qquad \qquad F_{\rm cr}$ & $F=1$  \\
    \hline
	    $Pack. PBT$   & $\alpha>1/2$ & $\alpha_{\rm cr}=1/2\,$, $F_{\rm cr}= e^{-3a^{2}/4}$ & $\alpha<1/2$  \\
	    $Pack. OPBT$   & $\alpha>2/3$ & $\alpha_{\rm cr}=2/3\,$, $F_{\rm cr}= e^{-\pi^2 a^{3}}$ & $\alpha<2/3$  \\
	    $MPBT$   & $-$ & $\alpha_{\rm cr}=1,\ \  \ $  $F_{\rm cr}\geq e^{-\frac{3a}{1-a}}$ & $\alpha<1$  \\
      
         \hline
    \end{tabular}
    }
    \caption{ Comparison of the asymptotic behaviour of two variants of packaged PBT with MPBT in deterministic version, where $k=a N^\alpha$. 
    By "cr" we denote the critical values of parameter $\alpha$ for which 
    asymptotic value of $F$ exhibits a jump.
    }
    \label{tab:fid_alpha}
\end{table}

\begin{figure}[!]
	\begin{centering}
		\includegraphics[width=0.45\textwidth]{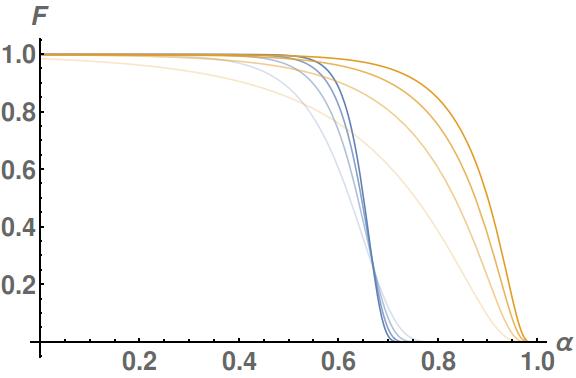}
		\caption{ 
		Asymptotic values of the lower bound on entanglement fidelity in MPBT  (orange lines) compared to the fidelity of Pack. OPBT (blue lines), where $k=aN^\alpha, a=1,\;\alpha\in{(0,1)}$, see Table~\ref{tab:fid_alpha}. $N$~runs through $10^2, 10^3, 10^4$ and $10^5$ as the lines become thicker.}
		\label{fig:asym_f}%
	\end{centering}
\end{figure}

MPBT scheme offers qualitative improvement  over the original scheme for teleporting multipartite states. Even in optimal scheme of standard PBT (i.e. one, with optimal POVMs and the resource state) the number of teleported qubits can only scale as $N^{2/3}$, while in the original scheme, (where the resource state is just $N$ EPR pairs)  the scaling is $N^{1/2}$.

\subsection{Probabilistic protocol} 

Here we shall compare MPBT versus packaged PBT scheme in probabilistic scenario, where the figure of merit is probability of success. Again we will assume that $k=a N^\alpha$. 
We do not have here 
such simple explicit bounds as those given in Theorem \ref{bound_thmF}. 
We shall therefore use exact formula derived in companion paper by means of representation theoretic tools~\cite{Stu2020}. Basing on this result, in Appendix~\ref{AppC} we present exact formula for the probability of success in qubit case expressed in terms  of angular momentum parameters which reads:
  \begin{equation}
  \label{p_succ}
  p_{succ}(N,k)=\frac{1}{2^N}\frac{1}{N+1}\sum_{s=0(\frac{1}{2})}^\frac{N-k}{2}(2s+1)^2\binom{N+1}{\frac{N-k}{2}-s}.
  \end{equation}

\begin{table}[]
    \centering
    \resizebox{8.5cm}{!}{%
    \begin{tabular}{c|c|l|c}  & $p_{s}=0$ & $\qquad \qquad p_{s,\rm cr}$ & $p_s=1$  \\
    \hline
    $Pack. PBT$   & $\alpha>1/3$ & $\alpha_{\rm cr}=1/3\,$, $p_{s,\rm cr}= e^{-ca^{3/2}}$ & $\alpha<1/3$  \\
    $Pack. OPBT$   & $\alpha>1/2$ & $\alpha_{\rm cr}=1/2,\,$ $p_{s,\rm cr}=e^{-3a^2}$ & $\alpha<1/2$  \\
      $MPBT$   & $\alpha>1/2$ & $\alpha_{\rm cr}=1/2, \ $ eq. \eqref{eq26} & $\alpha<1/2$  \\
         $OMPBT$   & $-$ & $\alpha_{\rm cr}=1, \  \ $ $p_{s,\rm cr}=(1+a)^3$ & $\alpha<1$\\
         \hline
    \end{tabular}
    }
	\caption{Table collects the comparison of the asymptotic behaviour of probability of success $p_s$ of all variants of packaged PBT with MPBT in probabilistic version when $k=aN^\alpha$. Here $c=\sqrt{8/\pi}$. By "cr" we denote the critical values of parameter $\alpha$ for which asymptotic value of $p_s$ exhibits a jump.}
    \label{tab:p_alpha}
\end{table}

  Setting $k=a\sqrt{N}$ we can examine the dependence of \eqref{p_succ} on $a$. We can observe, that when $a\to 0$ then $p_{succ} \to 1$. The latter is given (\ref{eq26}), proven in Appendix~\ref{finite} with the means of Central Limit type theorem, although in not completely straightforward way.
  \be
\label{eq26}
\lim_{N \to \infty} p_{succ} = \begin{cases}

2\int_0^\infty x^2\frac{1}{\sqrt{2\pi}}\operatorname{e}^{-\frac{(x+a)^2}{2}}dx,\; & k=a\sqrt{N}, a>0 \\
1, \quad & k=o(\sqrt{N})

\end{cases}
\ee

For PBT,  in the case of original (i.e. nonoptimized) scheme \cite{ishizaka_quantum_2009} we have $p_{succ}\approx 1-c/\sqrt{N}$, where $c=\sqrt{\frac{8}{\pi}}$, and $p_{succ}=1-3/{(3+N)}\approx1-3/{N}$ in the case of OPBT~\cite{ishizaka_quantum_2009}. We may then apply the same reasoning as in the deterministic case. 
Considering (\ref{p_succ}) we can see that 
\begin{equation}
    \lim_{N\rightarrow\infty}p_{succ}^{MPBT}=1 \text{ for } \alpha < 1/2.
\end{equation}
We can thus see that, unlike the deterministic protocol, the regions of asymptotic behaviour of $p_{succ}$ are separated by the same critical value $\alpha_{cr}=1/2$ for packaged OPBT and MPBT probablistic protocols.  Plotting $p_{succ}^{MPBT}$ and $p_{succ}^{OPBT}$ as a function of $a$ in this borderline case, i.e. when $\alpha=\alpha_{cr}=1/2$ (see Figure~\ref{fig:asym}), we observe the new protocol outperforms the previous one in the regime of large $a$. 

\begin{figure}[!]
	\begin{centering}
		\includegraphics[width=0.45\textwidth]{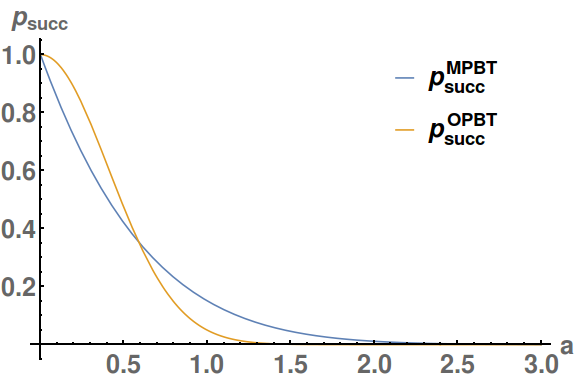}
		\caption{ 
		The limiting probability of success in the probabilistic schemes: Pack. OPBT(orange line) and MPBT (blue line) plotted as a function of $a$ when $k=a\sqrt N$, i.e. $\alpha=1/2$, see Table~\ref{tab:p_alpha}.}
		\label{fig:asym}%
	\end{centering}
\end{figure}
However, one can refer to recent result concerning optimal MPBT~\cite{2020OPT}, where remarkably simple expression for probability of success is evaluated:
\be
p_{succ}^{OMPBT}=\prod\limits_{m=2}^{d^2}\left(1-\frac{k}{N-1+m}\right).
\ee
This for $k=o(N)$ leads to $\lim_{N\rightarrow\infty}p_{succ}^{OMPBT}=1$. We have for $d=2$
\be
\lim_{N\rightarrow\infty}p_{succ}^{OMPBT}=\begin{cases}
(1-a)^3, &k=aN\\
1, &k = o(N)
\end{cases}
\ee

which outperforms the optimal PBT.
This for $k=o(N)$ leads to $\lim_{N\rightarrow\infty}p_{succ}^{OMPBT}=1$. Moreover, for $d=2$ and $k=aN$, that $\lim_{N\rightarrow\infty}p_{succ}^{OMPBT}=(1-a)^3$,
which outperforms the optimal packaged PBT.
We present the comparison of scaling of  the packaged OPBT and OMPBT protocols in the Table~\ref{tab:p_alpha} and Figure~\ref{fig:asym_p}.

\begin{figure}[!]
	\begin{centering}
		\includegraphics[width=0.45\textwidth]{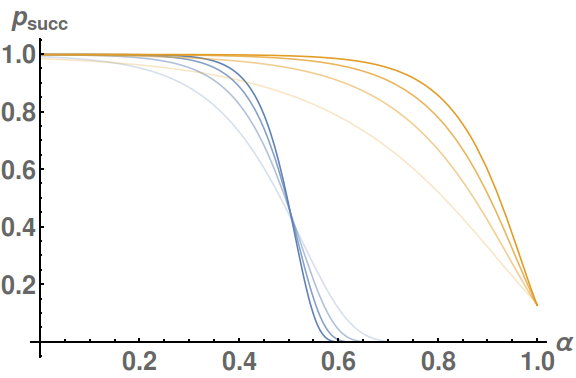}
		\caption{ 
		Probability of success, compared for Pack. OPBT (blue lines), and OMPBT (orange lines) protocols, where $k=aN^\alpha, a=\frac{1}{2}$, see Table~\ref{tab:p_alpha}. 
 $N$~runs through $10^2, 10^3, 10^4$ and $10^5$ as the lines become thicker.}
		\label{fig:asym_p}%
	\end{centering}
\end{figure}
Finally, for a finite number of ports $N$ we have precise lower and upper bound for $p_{succ}$ in our protocol, which are presented in the Appendix~\ref{finite}.

\section{Conclusions and Discussions}
 We have addressed the problem of teleporting a large amount of quantum information. In particular we analyse quantum multi-port teleportation protocol performing transmission of  several quantum systems in one round. 
 By examining improved lower bound on its performance, measured in the entanglement fidelity, we show that the protocol outperforms introduced earlier PBT protocols.
 The bound has been obtained by considering the teleportation process in our protocol as a state discrimination task and it depends only on global parameters like the number of ports, their dimension, and the number of teleported particles.  In particular,  we derived closed expressions for the entanglement fidelity and the probability of success in the described scheme for qubits in the picture of the angular momentum and using Gaussian approximation.
 
 Further, we have shown that in general the number of systems to be teleported can be changed dynamically by the sender with the growing number of ports, still ensuring high efficiency in deterministic and probabilistic scheme.   Even in the optimal PBT scheme, in deterministic case,   
 fidelity can approach $1$ for teleporting  of up to $N^{2/3}$ particles, for our MPBT  $N^\alpha$  particles can be teleported, if only $\alpha<1$.  In probabilistic case, although the rates are equal and the teleportation achieving asymptotically unit probability of success is possible for number of teleported particles of the order $N^\alpha, \alpha < 1/2$, both in MPBT and OPBT, we present result concerning optimal MPBT, which outperforms previous protocols~\cite{2020OPT}.

 We have thus showed that analysed protocol, while still requiring quite mild correction on Bob's site  (just permuting his systems), exhibits qualitatively better "capacity" of transmission.  Our results pave a novel way in teleportation and quantum communication in general. So far in quantum communication, while sending quantum information, one was interested in 
linear rate, i.e. the number of sent qubits per number of channel (or entangled pair) uses. 
On the other hand in quantum computing, it is very important to have teleportation with less correction 
on receiver's side. The port-based teleportation is protocol with virtually no correction. Here,  we consider \textit{ for the first time} the issue of the amount of quantum information sent via 
such teleportation protocols. Our work implies, that within the realm of weakened correction - the proper notion of communication efficiency is determined by   asymptotic exponents in first place.

 
Still, there are a few open questions. First, except probabilistic OMPBT, we do not have exact asymptotic expressions for $d>2$. However, application of central limit theorem or adaptation techniques from~\cite{christ2018asymptotic} theorem should lead to the solution. In the case of deterministic OMPBT the problem of getting the asymptotic behaviour is more complex, since we do not have a closed formula for the fidelity. The fidelity there is given in terms of maximal eigenvalue of teleportation matrix~\cite{StuNJP,Stu2020}, for which we do not have an analytical formula yet, even for $k=1$. Finally, we leave the rigorous analysis of the recycling protocol introduced in~\cite{strelchuk_generalized_2013} and its comparsion with our results for further project.

{\bf Acknowledgements}  MS, MM are supported through grant Sonatina 2, UMO-2018/28/C/ST2/00004 from the Polish National Science Centre. 
Moreover, MH and MM thank the Foundation for Polish
Science through IRAP project co-financed by the EU within
the Smart Growth Operational Programme (contract no.
2018/MAB/5). 
MH also acknowledges support from the National Science Centre, Poland, through grant OPUS 9, 2015/17/B/ST2/01945.
MM and PK would like to thank ICTQT Centre (University of Gda{\'n}sk) for hospitality where part of this work has been done.
\onecolumngrid
\appendix
\section{Symmetries in Multi-port teleportation scheme}
\label{secI}
Before we present argumentation leading us to the proof of Lemma 1 from the main text, we introduce here concepts of permutation operator and its partial transposition. Most of the formalism presented here can be also find in~\cite{Stu2020,Moz1}. Let us consider a representation $V$ of the permutation  group $S(n)$ in the space $\mathcal{H\equiv (\mathbb{C}}^{d})^{\otimes n}$, defined in the following way
	\begin{equation}
	\label{repV}
	\forall \pi \in S(n)\qquad V(\pi ).|e_{i_{1}}\>\otimes |e_{i_{2}}\>\otimes
	\cdots \otimes |e_{i_{n}}\>:=|e_{i_{\pi ^{-1}(1)}}\>\otimes |e_{i_{\pi
			^{-1}(2)}}\>\otimes \cdots \otimes |e_{i_{\pi ^{-1}(n)}}\>,
	\end{equation}
where the set $\{|e_{i}\>\}_{i=1}^{d}$ is an orthonormal basis of the space $\mathcal{\mathbb{C}}^{d}$, and $d$ stands for the dimension. We drop here the lower index in every $i$, since it labels only position of the basis in tensor product $(\mathbb{C}^{d})^{\otimes n}$. This representation is in fact a unitary matrix representation, since it is given with respect to prescribed basis $\{|e_{i}\>\}_{i=1}^{d}$ in the space $\mathcal{\mathbb{C}}^{d}$, and for every $\pi \in S(n)$ the corresponding matrix representation $V(\pi)$ is called a permutation operator. The representation $V$ of $S(n)$ extends in a natural way to the representation of the group algebra $\mathbb{C}\lbrack S(n)]$ and in this way we get the algebra of permutation operators
\be
\mathcal{A}_{n}(d)\equiv \operatorname{span}_{\mathbb{C}}\{V(\pi):\pi \in S(n)\}.
\ee
Having the above definition of the algebra $\mathcal{A}_{n}(d)$, we can introduce a new complex algebra - algebra of the partially transposed permutation operators:
\be
\label{atdef}
\mathcal{A}_{n}^{(k)}(d)\equiv \operatorname{span}_{\mathbb{C}}\{V^{(k)}(\pi ):\pi \in S(n)\}, 
\ee
where the symbol $(k)$ denotes partial transpose operation  with respect to last $k$ systems in the space $\mathcal{(\mathbb{C}}^{d})^{\otimes n}.$   Now, we show that the operator $\rho$ defined in expression (6) from the main text is an element of the algebra $\mathcal{A}_{n}^{(k)}(d)$. Let us denote by $n=N+k$ the number of all systems involved in the teleportation process on Alice's (Bob's) side and introduce the following mapping of indices
\be
\forall \ \mathbf{i}\in\mathcal{I} \quad B_{\mathbf{i}}\mapsto B=\underbrace{B_{n-k+1}B_{n-k+2}\cdots B_{n}}_k.
\ee
Having that we can work with the equivalent form of the operator $\rho$, which is given through the following expression
\be
\rho=\sum_{\mathbf{i}\in \mathcal{I}}\sigma_i=\frac{1}{d^{N-k}}\sum_{\mathbf{i}\in \mathcal{I}}\mathbf{1}_{\bar{A}_{\mathbf{i}}}\ot P^+_{A_{\mathbf{i}}B},
\ee
where for a given index $\mathbf{i}=\{i_1,i_2,\ldots,i_k\}$ the operator $\mathbf{1}_{\bar{A}_{\mathbf{i}}}\ot P^+_{A_{\mathbf{i}}B}$ is of the following form
\be
\begin{split}
\sigma_{\mathbf{i}}&=\mathbf{1}_{\bar{A}_{\mathbf{i}}}\ot P^+_{A_{\mathbf{i}}B}=\mathbf{1}_{\bar{A}_{\mathbf{i}}}\ot P^+_{i_1,n-k+1}\ot P^+_{i_2,n-k+2}\ot \cdots \ot P^{+}_{i_k,n}\\
&=\frac{1}{d^N}\mathbf{1}_{\bar{A}_{\mathbf{i}}}\ot V^{t_{n-k+1}}[(i_1,n-k+1)]\ot V^{t_{n-k+2}}[(i_2,n-k+2)]\ot \cdots \ot V^{t_{n}}[(i_k,n)].
\end{split}
\ee
By $t_{n-k+1},t_{n-k+2},\ldots,t_n$ we denote transposition with respect to systems on positions $n-k+1,n-k+2,\ldots,n$ respectively.   Next, we have to rewrite $\rho$ in more convenient form for our further considerations. First, we observe that we can distinguish an index $\mathbf{i}_0=\{n-k,n-k-1,\ldots,n-2k+1\}$, for which the corresponding signal is called canonical:
\be
\begin{split}
\sigma_{\mathbf{i}_0}&=\frac{1}{d^N}\mathbf{1}_{\bar{A}_{\mathbf{i}_0}}\ot V^{t_{n-k+1}}[(n-k,n-k+1)]\ot V^{t_{n-k+2}}[(n-k-1,n-k+2)]\ot \cdots \ot V^{t_{n}}[(n-2k+1,n)]\\
&=\frac{1}{d^N}\left(\mathbf{1}_{\bar{A}_{\mathbf{i}_0}}\ot V[(n-k,n-k+1)]\ot V[(n-k-1,n-k+2)]\ot \cdots \ot V[(n-2k+1,n)]\right)^{t_{n-k+1} t_{n-k+2} \cdots t_n}\\
&=\frac{1}{d^N}V^{(k)},
\end{split}
\ee
by $V$ we denote the total permutation operator in the bracket with this specific composition of permutations, including the identity operator. Having the definition of $\sigma_{\mathbf{i}_0}$, let us observe that any other signal can be obtained from it by acting of $V(\tau)$, where $\tau$ is permutation from the coset $\mathcal{S}_{n,k}=S(n-k)/S(n-2k)$, so
\be
\label{rr}
\rho=\frac{1}{d^N}\sum_{\tau \in \mathcal{S}_{n,k}}V(\tau)V^{(k)}V^{\dagger}(\tau).
\ee
This clearly shows that the operator $\rho$, as well as every operator $\sigma_{\mathbf{i}}$ belongs to the algebra $\mathcal{A}_n^{(k)}(d)$ defined in~\eqref{atdef}.
The cardinality of $\mathcal{S}_{n,k}$ gives us number of all possible signals $k!\binom{N}{k}$. Since all the operators $V(\tau)$ are invariant with respect to the composition of partial transpositions $(k)$, we can rewrite $\rho$ from~\eqref{rr} as in the following definition:
\begin{definition}
		\label{def9}
		For $N$ ports the operator $\rho$ from expression~\eqref{rr} can be re-written in the following form:
	\begin{equation}
	\rho
	\equiv \sum_{a_{1},a_{2},\ldots,a_{k}=1}^{n-k}V^{(k)}[(a_{k},n-k+1)(a_{k-1},n-k+2).\cdots(a_{1},n)],
	\end{equation}%
	where all numbers $a_{1},a_{2},\ldots,a_{k}$ are different and $k\leq \floor{ 
	\frac{N}{2}}$. For compactness of the further calculations  we drop here the normalisation constant $1/d^N$ in form of every signal $\sigma_{\mathbf{i}}$.
\end{definition}

\section{Proof of Lemma 1 and Theorem 1 from the main text}
\label{AppB}
Having discussion on symmetries in multi-port based teleportation protocols we are in position to compute the lower bound on the entanglement fidelity $F$. As it was pointed in the main text, and in~\cite{ishizaka_asymptotic_2008,beigi_konig}, the entanglement fidelity $F$ can be connected with probability of success of the state discrimination $p_{dist}$ of the ensemble $\mathcal{E}$:
\be
\mathcal{E}=\left\{\frac{1}{k!\binom{N}{k}},\sigma_{\mathbf{i}}\right\},
\ee
where $\sigma_{\mathbf{i}}$ are the signal states described in Section~\ref{secI}. It can be shown that $F$ and $p_{dist}$ are related to each other by the following relation
\be
F=\frac{k!\binom{N}{k}}{d^{2k}}p_{dist},\qquad p_{dist}=\frac{1}{k!\binom{N}{k}}\sum_{\mathbf{i}\in\mathcal{I}}\tr(\Pi_{\mathbf{i}}\sigma_{\mathbf{i}})\geq \frac{1}{k!\binom{N}{k}r\tr \overline{\rho}^2},
\ee
where $r=(1/(k!\binom{N}{k}))\sum_{\mathbf{i}\in\mathcal{I}}\operatorname{rank}(\sigma_{\mathbf{i}})=d^{N-k}$, and the lower bound on $p_{dist}$ is a simple generalisation of Lemma A.3 from~\cite{beigi_konig}. This means that $F$ is bounded from the below as
\be
\label{rhs}
F\geq \frac{1}{d^{N+k}\tr(\overline{\rho}^2)},
\ee
where the operator $\overline{\rho}$ is normalised version ($\tr \overline{\rho}=1$) of the operator $\rho$ from Definition~\ref{def9}:
\be
\label{norm0}
\overline{\rho}=\frac{\rho}{\tr(\rho)}=\frac{1}{d^{N}k!\binom{N}{k}}\rho,
\ee
since we can observe that for operator $\rho$ from Definition~\ref{def9} the following holds
\be
\label{trrho}
\begin{split}
\tr(\rho)&=d^{n-k}(n-k)(n-k-1)\cdot \ldots \cdot (n-2k+1)=d^{N}N(N-1)\cdot\ldots \cdot (N-k+1)=d^Nk!\binom{N}{k}\\
&=d^N\frac{N!}{(N-k)!},
\end{split}
\ee
where $N=n-k$ denotes number of ports. It means that normalised operator $\overline{\rho}$ can be written finally as
\be
\label{norm}
\overline{\rho}=\frac{\rho}{\tr(\rho)}=\frac{1}{d^{N}\frac{N!}{(N-k)!}}\rho.
\ee
Later on we use form of $\overline{\rho}$ from expression~\eqref{norm}.

The above considerations show that to compute the right-hand side of~\eqref{rhs}  the crucial is evaluation of $\tr(\rho^2)$.  The calculation of $\operatorname{Tr}(\rho^2)$ is technically complicated and lengthy and it contains a few intermediate steps and auxiliary results which we present separately as Lemma \ref{L4} to \ref{L15}. The final result, i.e. the formula for $\operatorname{Tr}(\rho^2)$ is given in Theorem~\ref{thm8} . We will use in the derivation of this  result the following useful notation
\begin{notation}
\label{notation}
	We define inductively a sequence of transpositions%
	\be
	L_{1}=(a_{1},b_{1}), \ L_{2}=(a_{2},L_{1}(b_{2}\
	)), \ L_{3}=(a_{3},L_{2}L_{1}(b_{3}\ )),\ldots, L_{k}=(a_{k},L_{k-1}\cdots L_{2}L_{1}(b_{k}\ )),
	\ee
	\be
	L_{1}^{\prime }=(b_{1},a_{1}), \ L_{2}^{\prime
	}=(b_{2},L_{1}^{\prime }(a_{2}\ )), \ L_{3}^{\prime
	}=(b_{3},L_{2}^{\prime }L_{1}^{\prime }(a_{3}\ )),\ldots,	L_{k}^{\prime }=(b_{k},L_{k-1}^{\prime }\cdots L_{2}^{\prime }L_{1}^{\prime
}(a_{k}\ )),
	\ee
	where $(a,b)$ is simply a transposition acting on some $c$ as

	\be
	(a,b)(c)=\begin{cases}
		c, & \text{when }a,b \neq c\\
		b, & \text{when }a=c\\
		a, & \text{when }b=c
	\end{cases}
		\ee
	and where the numbers $a_{1},a_{2},\ldots,a_{k}$ are different and similarly the numbers 
	$b_{1},b_{2},\ldots,b_{k}$ are different. For such a transpositions we define
	reversed  transpositions also defined on the same numbers $%
	a_{1},a_{2},\ldots,a_{k}$ and $b_{1},b_{2},\ldots,b_{k}$ in the following way so in the primed transpositions the numbers $a_{1},a_{2},\ldots,a_{k}$ and $b_{1},b_{2},\ldots,b_{k}$ are interchanged.
\end{notation}
The above transpositions satisfy the relation which
easily follows from the structure of the transpositions
\begin{lemma}
	\label{L4}
	We have 
	\begin{equation}
	L_{s}^{\prime }L_{s-1}^{\prime }\cdots L_{1}^{\prime }=L_{1}L_{2}\cdots L_{s},\qquad
	s=1,\ldots,k.
	\end{equation}
\end{lemma}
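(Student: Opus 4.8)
The plan is to prove the identity by induction on $s$, relying on just two elementary facts about the symmetric group: each $L_i$ (and each $L_i'$) is a transposition, hence an involution with $L_i^{-1}=L_i$, and transpositions behave under conjugation according to the standard rule $\sigma(x,y)\sigma^{-1}=(\sigma(x),\sigma(y))$. The base case $s=1$ is immediate, since $L_1'=(b_1,a_1)=(a_1,b_1)=L_1$.

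For the inductive step I would set $\Pi_{s-1}:=L_1L_2\cdots L_{s-1}$ and take as inductive hypothesis that $L_{s-1}'\cdots L_1'=\Pi_{s-1}$. Reading off the target identity, $L_s'L_{s-1}'\cdots L_1'=L_1\cdots L_s$ becomes $L_s'\,\Pi_{s-1}=\Pi_{s-1}\,L_s$ once the hypothesis is substituted on both sides; multiplying on the right by $\Pi_{s-1}^{-1}$ reduces everything to verifying the single conjugation identity $L_s'=\Pi_{s-1}\,L_s\,\Pi_{s-1}^{-1}$. On the left, the definition of $L_s'$ together with the hypothesis $L_{s-1}'\cdots L_1'=\Pi_{s-1}$ gives $L_s'=(b_s,\,\Pi_{s-1}(a_s))$. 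On the right, applying the conjugation rule to $L_s=(a_s,\,L_{s-1}\cdots L_1(b_s))$ yields $\Pi_{s-1}L_s\Pi_{s-1}^{-1}=(\Pi_{s-1}(a_s),\,\Pi_{s-1}L_{s-1}\cdots L_1(b_s))$.

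The whole argument then hinges on a single telescoping observation, which I regard as the only real content of the lemma rather than a genuine obstacle: because every $L_i$ is an involution, the reversed product appearing inside $L_s$ is exactly the inverse of $\Pi_{s-1}$, i.e.\ $\Pi_{s-1}^{-1}=(L_1\cdots L_{s-1})^{-1}=L_{s-1}\cdots L_1$, so that $\Pi_{s-1}L_{s-1}\cdots L_1=\id$ collapses the second entry to $b_s$. Hence $\Pi_{s-1}L_s\Pi_{s-1}^{-1}=(\Pi_{s-1}(a_s),b_s)=(b_s,\Pi_{s-1}(a_s))=L_s'$, closing the induction. The point to handle with care is matching the order of composition: the inductive definition of $L_s$ uses the product $L_{s-1}\cdots L_1$ in the reverse order to $\Pi_{s-1}$, and it is precisely this reversal that makes the telescoping to the identity work.
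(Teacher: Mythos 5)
Your proof is correct. Note that the paper itself gives no argument for this lemma at all: it only remarks that the identity ``easily follows from the structure of the transpositions,'' so your induction supplies precisely the details that are left implicit. The reduction of the inductive step to the single conjugation identity $L_s'=\Pi_{s-1}L_s\Pi_{s-1}^{-1}$, combined with the standard rule $\sigma(x,y)\sigma^{-1}=(\sigma(x),\sigma(y))$ and the telescoping $\Pi_{s-1}\,L_{s-1}\cdots L_1=\mathrm{id}$ coming from the involutive property of transpositions, is a clean and complete way to make the paper's assertion rigorous, and you are right that the only delicate point is the reversed order $L_{s-1}\cdots L_1$ appearing inside the definition of $L_s$, which is exactly what makes the telescoping work. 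One marginal remark: with the paper's definitions it can happen that $a_s=L_{s-1}\cdots L_1(b_s)$ (e.g.\ $a_1=1$, $b_1=2$, $a_2=2$, $b_2=1$), in which case $L_s$ degenerates to the identity rather than a genuine transposition; this does not harm your argument, since both the conjugation rule and the involution property hold for $(x,x)=\mathrm{id}$ as well, but strictly speaking you should say ``involution'' rather than ``transposition'' when invoking these facts.
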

Using introduced notation we may derive the following composition rule
\begin{lemma}
	\label{P12}
	Let the numbers $1\leq a_{1},a_{2},\ldots,a_{k}\leq n-k$ are different and
	similarly numbers $1\leq b_{1},b_{2},\ldots,b_{k}\leq n-k$ are different and $%
	k\leq \floor{\frac{n}{2}}$, then 
	\be
	\begin{split}
	&V^{(k)}[(a_{k},n-k+1)(a_{k-1},n-k+2)\cdots (a_{1},n)]V^{(k)}[(b_{k},n-k+1)(b_{k-1},n-k+2)\cdots (b_{1},n)]\\
	&=d^{\delta _{a_{k},L_{k-1}\cdots L_{2}L_{1}(b_{k})}}d^{\delta
		_{a_{k-1},L_{k-2}\cdots L_{2}L_{1}(b_{k-1})}}\cdots d^{\delta
		_{a_{2,L_{1}(b_{2})}}}d^{\delta
		_{a_{1,}b_{1}}}V(L_{k})V(L_{k-1})\cdots V(L_{1})\times\\
	&\times V^{(k)}[(b_{k},n-k+1)(b_{k-1},n-k+2)\cdots (b_{1},n)]. 
	\end{split}
	\ee
\end{lemma}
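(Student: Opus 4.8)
The plan is to prove Lemma~\ref{P12} by induction on the number $k$ of partially transposed systems, peeling off one transposed system at a time and using the elementary contraction rule for partial transposes of transpositions as the base case. The two essential facts are $V^{t_n}[(a,n)] = d\,P^+_{a,n}$, i.e. the partial transpose of a single swap is $d$ times the maximally entangled projector, and the contraction identity $P^+_{a,n}P^+_{b,n} = \tfrac1d V[(a,b)]\,P^+_{b,n}$ valid for $a\neq b$, together with idempotency $(P^+_{a,n})^2=P^+_{a,n}$ when $a=b$. For $k=1$ these combine into
\[
V^{(1)}[(a_1,n)]\,V^{(1)}[(b_1,n)] = d^{\,\delta_{a_1,b_1}}\,V(L_1)\,V^{(1)}[(b_1,n)],\qquad L_1=(a_1,b_1),
\]
which is exactly the claim for $k=1$: when $a_1=b_1$ the swap $V(L_1)$ degenerates to the identity and the idempotency produces precisely the extra closed-loop factor $d^{\,\delta_{a_1,b_1}}$, while for $a_1\neq b_1$ one gets the genuine swap and no factor of $d$.

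For the inductive step I would use that each $V^{(k)}$ operator factors as a commuting product over the transposed systems $n-k+1,\ldots,n$, every factor being $d\,P^+$ on a single transposed system. I would isolate the two factors living on the last system $n$, namely $V^{(1)}[(a_1,n)]$ and $V^{(1)}[(b_1,n)]$, bring them together, and apply the base identity to replace their product by $d^{\,\delta_{a_1,b_1}}V(L_1)\,V^{(1)}[(b_1,n)]$. The resulting swap $V(L_1)$ then has to be transported to the left past the factors carrying the systems $n-k+1,\ldots,n-1$; commuting it through relabels the remaining non-transposed indices $b_2,\ldots,b_k$ by $L_1$, which is exactly how $L_2=(a_2,L_1(b_2))$, $L_3=(a_3,L_2L_1(b_3))$, and so on are defined in Notation~\ref{notation}. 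After this relabeling the product of the surviving factors is again of the same shape as the left-hand side but with only $k-1$ transposed systems, so the induction hypothesis applies and delivers the remaining exponents $d^{\,\delta_{a_k,L_{k-1}\cdots L_1(b_k)}}\cdots d^{\,\delta_{a_2,L_1(b_2)}}$ and the accumulated swaps $V(L_k)\cdots V(L_2)$.

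To keep the two relevant orderings consistent I would invoke Lemma~\ref{L4}, $L_s'\cdots L_1' = L_1\cdots L_s$. The swaps are generated in the order $L_1,L_2,\ldots$ as we peel from the innermost system $n$ outward, yet they must be recomposed into the product $V(L_k)\cdots V(L_1)$ that appears on the right-hand side; the primed/unprimed identity reconciles the generation order with the required composition order and guarantees that each coincidence test defining an exponent of $d$ is performed against the correctly relabeled index.

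The genuine difficulty, and the technical heart of the argument, is the bookkeeping of non-commutativity: the factor on system $n$ and a surviving factor on system $n-k+j$ fail to commute precisely when their non-transposed indices coincide (e.g. $a_1=b_j$), and it is exactly these coincidences that create the powers of $d$ and force the progressive relabeling of indices by the accumulated swaps. Showing rigorously that peeling off one system and threading the resulting swap through the remainder reproduces exactly the inductively defined $L_j$ and the claimed product of $\delta$-exponents is where the real work lies; once this combinatorial pattern is pinned down, everything else reduces to the routine contraction algebra of maximally entangled projectors.
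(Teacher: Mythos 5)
Your proposal is correct, but there is nothing in the paper to compare it with: the paper never proves Lemma~\ref{P12} at all --- it is simply asserted as a composition rule that ``may be derived'' from Notation~\ref{notation} --- so your induction supplies an argument the authors omit. Your two building blocks, $V^{t_n}[(a,n)]=d\,P^+_{a,n}$ and $P^+_{a,n}P^+_{b,n}=\tfrac{1}{d}V[(a,b)]\,P^+_{b,n}$ for $a\neq b$ (with idempotency covering $a=b$), are exactly right, and the inductive mechanism does close: writing each $V^{(k)}$ as $d^k$ times a commuting product of maximally entangled projectors, the pair on system $n$ contracts to $d^{\delta_{a_1,b_1}}V(L_1)P^+_{b_1,n}$, and commuting the remaining $b$-projectors leftward through $V(L_1)$ replaces each $b_j$ by $L_1(b_j)$, so the induction hypothesis applied to the sequences $(a_2,\ldots,a_k)$ and $(L_1(b_2),\ldots,L_1(b_k))$ generates precisely $L_2=(a_2,L_1(b_2)),\,L_3=(a_3,L_2L_1(b_3)),\ldots$ and the claimed exponents, after which $V(L_1)$ is pulled back through the relabelled projectors to restore the unprimed $b_j$ and the ordering $V(L_k)\cdots V(L_1)$. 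One slip in your wording deserves fixing: you say transporting $V(L_1)$ \emph{to the left} relabels $b_2,\ldots,b_k$, but the factors sitting to the left of $V(L_1)$ carry the $a$-indices, so pushing it leftward relabels $a_j\mapsto L_1(a_j)$ instead; that route also works, but the induction then produces the conjugated transpositions $L_1L_{j+1}L_1$, and one must use the telescoping $V(L_1)\,V(L_1L_kL_1)\cdots V(L_1L_2L_1)=V(L_k)\cdots V(L_1)$ together with the invariance of the Kronecker deltas under applying the bijection $L_1$ to both arguments to recover the stated form. Either direction completes the proof; in the clean (rightward) version Lemma~\ref{L4} is not actually needed, so your appeal to it is harmless but superfluous.
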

Applying the statement of Proposition~\ref{P12} twice, together with Lemma~\ref{L4}, we get
\begin{lemma}
	\label{lem8}
	Let the numbers $1\leq a_{1},a_{2},\ldots,a_{k}\leq n-k$ are different and
	similarly the numbers $1\leq b_{1},b_{2},\ldots,b_{k}\leq n-k$ are different and $k\leq \floor{\frac{n}{2}}$, then 
	\be
	\label{exp}
	\begin{split}
	&V^{(k)}[(a_{k},n-k+1)(a_{k-1},n-k+2)\cdots(a_{1},n)]V^{(k)}[(b_{k},n-k+1)(b_{k-1},n-k+2)\cdots(b_{1},n)]\times\\
	&\times V^{(k)}[(a_{k},n-k+1)(a_{k-1},n-k+2)\cdots(a_{1},n)]\\
&=d^{2\delta _{a_{k},L_{k-1}\cdots L_{2}L_{1}(b_{k})}}d^{2\delta
		_{a_{k-1},L_{k-2}\cdots L_{2}L_{1}(b_{k-1})}}\cdots d^{2\delta
		_{a_{2,L_{1}(b_{2})}}}d^{2\delta _{a_{1,}b_{1}}}\times\\ 
	&\times V^{(k)}[(a_{k},n-k+1)(a_{k-1},n-k+2)\cdots (a_{1},n)]. 
	\end{split}
	\ee
\end{lemma}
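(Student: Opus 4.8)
The plan is to apply Lemma~\ref{P12} twice and then collapse the resulting string of permutation operators by means of Lemma~\ref{L4}. To lighten notation, write $A$ for the composite permutation $(a_{k},n-k+1)(a_{k-1},n-k+2)\cdots(a_{1},n)$ and $B$ for $(b_{k},n-k+1)(b_{k-1},n-k+2)\cdots(b_{1},n)$, so that the left-hand side of~\eqref{exp} is $V^{(k)}[A]\,V^{(k)}[B]\,V^{(k)}[A]$.

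First I would apply Lemma~\ref{P12} to the leftmost pair $V^{(k)}[A]V^{(k)}[B]$, obtaining
\[
V^{(k)}[A]V^{(k)}[B]=d^{E}\,V(L_{k})V(L_{k-1})\cdots V(L_{1})\,V^{(k)}[B],
\]
where $E=\delta_{a_{k},L_{k-1}\cdots L_{1}(b_{k})}+\cdots+\delta_{a_{1},b_{1}}$ and the $L_{i}$ are the transpositions of Notation~\ref{notation}. Right-multiplying by $V^{(k)}[A]$ and applying Lemma~\ref{P12} once more, this time to the pair $V^{(k)}[B]V^{(k)}[A]$ with the roles of the $a$'s and $b$'s interchanged (so that the relevant transpositions are now the primed ones $L_{i}'$), yields
\[
V^{(k)}[B]V^{(k)}[A]=d^{E'}\,V(L_{k}')\cdots V(L_{1}')\,V^{(k)}[A],
\]
with $E'=\delta_{b_{k},L_{k-1}'\cdots L_{1}'(a_{k})}+\cdots+\delta_{b_{1},a_{1}}$. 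Chaining the two gives $V^{(k)}[A]V^{(k)}[B]V^{(k)}[A]=d^{E+E'}\,V(L_{k})\cdots V(L_{1})V(L_{k}')\cdots V(L_{1}')\,V^{(k)}[A]$.

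Two things remain. For the operators, I would invoke that $V$ is a representation, so the product equals $V\!\big(L_{k}\cdots L_{1}\,L_{k}'\cdots L_{1}'\big)$; Lemma~\ref{L4} with $s=k$ gives $L_{k}'\cdots L_{1}'=L_{1}\cdots L_{k}$, whence the argument is $L_{k}\cdots L_{2}L_{1}\,L_{1}L_{2}\cdots L_{k}$, which telescopes to the identity since each $L_{i}$ is an involution. Thus the permutation factors collapse to $\mathbf{1}$. For the scalar, each Kronecker delta is symmetric and each $L_{i}$ is self-inverse, so $\delta_{a_{s},L_{s-1}\cdots L_{1}(b_{s})}=\delta_{b_{s},L_{1}\cdots L_{s-1}(a_{s})}$; combining this with $L_{s-1}'\cdots L_{1}'=L_{1}\cdots L_{s-1}$ from Lemma~\ref{L4} shows that each summand of $E'$ equals the corresponding summand of $E$, hence $E=E'$ and the prefactor is $d^{2E}$, exactly the exponent appearing in~\eqref{exp}. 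Assembling these gives the claimed identity.

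The main obstacle I anticipate is bookkeeping rather than conceptual depth: one must keep scrupulous track of which transposition sequence ($L_{i}$ versus $L_{i}'$) is generated at the second application of Lemma~\ref{P12}, and confirm that the inverse directions in the delta-symmetry identity and in Lemma~\ref{L4} are mutually consistent, so that the two exponents genuinely coincide and the central string of $V(L_{i})$'s cancels to the identity.
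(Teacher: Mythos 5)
Your proposal is correct and follows essentially the same route as the paper, which derives the lemma by applying Lemma~\ref{P12} twice and invoking Lemma~\ref{L4}; your write-up merely fills in the details (the collapse $L_k\cdots L_1L_1\cdots L_k=\operatorname{id}$ and the equality of the two delta-exponents via symmetry of the Kronecker delta and involutivity of transpositions) that the paper leaves implicit.
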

From expression~\eqref{exp} we get immediately
\begin{corollary}
\label{cor9}
	Let the numbers $1\leq a_{1},a_{2},\ldots,a_{k}\leq n-k$ are different and
	similarly numbers $1\leq b_{1},b_{2},\ldots,b_{k}\leq n-k$ are different and $k\leq \floor{\frac{n}{2}}$, then 
	\be
	\label{formula}
	\begin{split}
	&\tr\left[
	V^{(k)}[(a_{k},n-k+1)(a_{k-1},n-k+2)\cdots (a_{1},n)]V^{(k)}[(b_{k},n-k+1)(b_{k-1},n-k+2)\cdots (b_{1},n)]%
	\right]\\
	&=d^{n-2k}d^{2\delta _{a_{k},L_{k-1}\cdots L_{2}L_{1}(b_{k})}}d^{2\delta
		_{a_{k-1},L_{k-2}\cdots L_{2}L_{1}(b_{k-1})}}\cdots d^{2\delta
		_{a_{2,L_{1}(b_{2})}}}d^{2\delta _{a_{1,}b_{1}}}.
	\end{split}
\ee
\end{corollary}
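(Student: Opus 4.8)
The plan is to derive the corollary from the operator identity~\eqref{exp} of Lemma~\ref{lem8} by taking its trace and then exploiting two elementary features of the operators $V^{(k)}[\cdots]$: cyclicity of the trace, and the fact that each such operator is, up to a scalar, a projector. Abbreviating the two partially transposed permutation operators in~\eqref{exp} as $V^{(k)}[a]$ and $V^{(k)}[b]$, Lemma~\ref{lem8} reads
\[
V^{(k)}[a]\,V^{(k)}[b]\,V^{(k)}[a]=C\,V^{(k)}[a],\qquad
C=\prod_{j=1}^{k}d^{\,2\delta_{a_j,\,L_{j-1}\cdots L_1(b_j)}} .
\]
Taking $\tr$ of both sides gives $\tr\!\big(V^{(k)}[a]V^{(k)}[b]V^{(k)}[a]\big)=C\,\tr\!\big(V^{(k)}[a]\big)$, and by cyclicity the left-hand side equals $\tr\!\big((V^{(k)}[a])^{2}\,V^{(k)}[b]\big)$. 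So everything reduces to two facts about the single operator $V^{(k)}[a]$.

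Next I would establish these two facts. The idempotency relation $(V^{(k)}[a])^{2}=d^{k}\,V^{(k)}[a]$ follows from the structure of $V^{(k)}[a]$: up to the identity acting on the $n-2k$ registers it does not touch, it is a tensor product of $k$ partial transposes of single transpositions, and each such factor is $d$ times the projector onto a maximally entangled pair, hence squares to $d$ times itself; multiplying the $k$ mutually independent factors produces the overall scalar $d^{k}$. The trace value $\tr(V^{(k)}[a])=d^{\,n-k}$ then follows either by noting that $V^{(k)}[a]/d^{k}$ is a projector of rank $d^{\,n-2k}$, or directly: partial transposition preserves the trace, and the underlying permutation is a product of $k$ disjoint transpositions together with $n-2k$ fixed points, i.e.\ $n-k$ cycles, whence $\tr V^{(k)}[a]=d^{\,n-k}$.

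Assembling these, $\tr\!\big((V^{(k)}[a])^{2}V^{(k)}[b]\big)=d^{k}\,\tr\!\big(V^{(k)}[a]V^{(k)}[b]\big)$, so the traced identity becomes $d^{k}\,\tr\!\big(V^{(k)}[a]V^{(k)}[b]\big)=C\,d^{\,n-k}$; dividing by $d^{k}$ yields $\tr\!\big(V^{(k)}[a]V^{(k)}[b]\big)=C\,d^{\,n-2k}$, which is exactly~\eqref{formula}. I expect the only genuine subtlety to lie in justifying the idempotency relation, or equivalently the value $\tr V^{(k)}[a]=d^{\,n-k}$: one must verify that the $k$ transpositions $(a_j,n-k+j)$ act on mutually disjoint pairs of registers, so that the factors multiply independently and the power of $d$ is exactly $k$. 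This is guaranteed by the hypotheses that the $a_j$ are distinct and satisfy $a_j\le n-k$ (hence differ from the target registers $n-k+1,\dots,n$) and that $k\le\floor{n/2}$; everything else is bookkeeping.
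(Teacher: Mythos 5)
Your proof is correct and takes essentially the same route as the paper: the paper obtains Corollary~\ref{cor9} directly from the identity~\eqref{exp} of Lemma~\ref{lem8} (stating it follows ``immediately''), and your argument --- tracing that identity, then using cyclicity together with $(V^{(k)}[a])^{2}=d^{k}\,V^{(k)}[a]$ and $\tr\bigl(V^{(k)}[a]\bigr)=d^{\,n-k}$ --- is precisely the computation the paper leaves implicit. The two supporting facts you single out are indeed guaranteed by the disjointness of the $k$ transpositions, which follows from the stated hypotheses, exactly as you note.
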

Using equation~\eqref{formula} we get that 
\begin{equation}
\tr(\rho
^{2})=\sum_{a_{i}}%
\sum_{b_{i}}\tr[V^{(k)}[(a_{k},n-k+1)(a_{k-1},n-k+2)\cdots (a_{1},n)]V^{(k)}[(b_{k},n-k+1)(b_{k-1},n-k+2)\cdots (b_{1},n)],
\end{equation}%
where the summation is all over pairwise different $1\leq
a_{1},a_{2},\ldots,a_{k}\leq n-k$ and $1\leq b_{1},b_{2},\ldots,b_{k}\leq n-k,$ is equal to%
\begin{equation}
\tr(\rho ^{2})=\sum_{a_{i}}\sum_{b_{i}}d^{n-2k}d^{2\delta
	_{a_{k},L_{k-1}\cdots L_{2}L_{1}(b_{k})}}d^{2\delta
	_{a_{k-1},L_{k-2}\cdots L_{2}L_{1}(b_{k-1})}}\cdots d^{2\delta
	_{a_{2,L_{1}(b_{2})}}}d^{2\delta _{a_{1,}b_{1}}}.
\end{equation}%
In order to calculate this sum we need the following

\begin{lemma}
	\label{L15}
Using Notation~\ref{notation} we have the following property:
	\begin{equation}
	b_{s}\neq b_{1},b_{2},\ldots,b_{s-1}\Rightarrow
	L_{s-1}L_{s-2}\cdots L_{1}(b_{s})\neq a_{1},a_{2},\ldots,a_{k-1},\quad s=1,\ldots,k.
	\end{equation}
\end{lemma}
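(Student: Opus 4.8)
The plan is to reduce the whole statement to a single identity describing how the partial products of the $L_i$ act on the $b$'s, and then to read off the claim from injectivity. Throughout I will write $\Pi_m := L_m L_{m-1}\cdots L_1$ for the partial product, with $\Pi_0 := \id$, so that $\Pi_m(x)=L_m(\Pi_{m-1}(x))$; by Notation~\ref{notation}, $L_s=(a_s,c_s)$ with $c_s:=\Pi_{s-1}(b_s)=L_{s-1}\cdots L_1(b_s)$, so the quantity in the conclusion is precisely $c_s$. Since the $a_i$ are pairwise distinct and the $b_i$ are pairwise distinct, each $\Pi_m$ is a bijection of $\{1,\ldots,n-k\}$, a fact I will use repeatedly.

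The technical heart, which I expect to be the main obstacle, is the identity
\begin{equation}
\label{eq:structure}
\Pi_m(b_j)=a_j\qquad\text{for all } 1\le j\le m ,
\end{equation}
to be proved by induction on $m$. The base case $m=1$ reads $\Pi_1(b_1)=(a_1,b_1)(b_1)=a_1$. Assuming \eqref{eq:structure} at level $m-1$, the case $j=m$ is immediate, $\Pi_m(b_m)=L_m(c_m)=(a_m,c_m)(c_m)=a_m$. The delicate case is $j<m$: there the hypothesis gives $\Pi_m(b_j)=L_m(a_j)$, and I must verify that $L_m=(a_m,c_m)$ fixes $a_j$, i.e. that $a_j\notin\{a_m,c_m\}$. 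Now $a_j\neq a_m$ because the $a$'s are distinct, while $a_j\neq c_m$ follows from $a_j=\Pi_{m-1}(b_j)\neq\Pi_{m-1}(b_m)=c_m$, using injectivity of $\Pi_{m-1}$ together with $b_j\neq b_m$. Checking that the newly inserted transposition $L_m$ does not disturb the already-placed pairs is exactly the point at which the distinctness of the $b$'s is consumed, and it is the crux of the lemma.

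Granting \eqref{eq:structure}, the statement follows by injectivity. Applying the identity at level $m=s-1$ shows that $\Pi_{s-1}$ carries $\{b_1,\ldots,b_{s-1}\}$ bijectively onto $\{a_1,\ldots,a_{s-1}\}$ via $b_j\mapsto a_j$. The hypothesis $b_s\neq b_1,\ldots,b_{s-1}$ places $b_s$ outside this set, and since $\Pi_{s-1}$ is a bijection of the whole index set, $c_s=\Pi_{s-1}(b_s)$ cannot equal any of the images $\Pi_{s-1}(b_j)=a_j$ with $j<s$. Thus $c_s$ avoids exactly the $s-1$ labels $a_1,\ldots,a_{s-1}$ already assigned by $L_1,\ldots,L_{s-1}$; in particular the instance $s=k$ gives the asserted $c_k\neq a_1,\ldots,a_{k-1}$.

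Finally I will record the form in which the statement feeds the evaluation of $\tr(\rho^2)=d^{\,n-2k}\sum_{a}\sum_{b}\prod_{j}d^{2\delta_{a_j,c_j}}$, where the $b$'s are summed one index at a time. At step $s$ the weight gains a factor $d^2$ precisely when $c_s=a_s$, that is when $b_s=\Pi_{s-1}^{-1}(a_s)$, and the crucial point is that this distinguished index is admissible, i.e. distinct from $b_1,\ldots,b_{s-1}$. This is itself a corollary of \eqref{eq:structure}: since $\Pi_{s-1}^{-1}(a_i)=b_i$ for $i<s$ and the $a$'s are distinct, none of the later labels $a_s,\ldots,a_{k-1}$ can satisfy $\Pi_{s-1}^{-1}(a_j)\in\{b_1,\ldots,b_{s-1}\}$, and Lemma~\ref{L4} (which identifies $\Pi_{s-1}^{-1}=L_1\cdots L_{s-1}$ with the reversed product $L_{s-1}'\cdots L_1'$) provides the bookkeeping for this complementary separation on the primed system. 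Consequently a freshly chosen $b_s$ is cleanly separated from the previously assigned labels, which is what makes the inner sum over $b_s$ collapse to $d^2+N-s$ independently of the earlier choices and yields the value of $\tr(\rho^2)$ recorded in Lemma~\ref{square}.
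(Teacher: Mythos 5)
Your proof is correct, and it actually fills a genuine gap: the paper states Lemma~\ref{L15} without any proof, presenting it (like Lemma~\ref{L4}) as evident from the structure of the transpositions in Notation~\ref{notation}, and then simply invokes it inside the proof of Theorem~\ref{thm8}. Your inductive invariant $\Pi_m(b_j)=a_j$ for all $j\le m$ is exactly the right strengthening of the claim: the case $j=m$ is built into the definition $L_m=(a_m,c_m)$ with $c_m=\Pi_{m-1}(b_m)$ (and survives the degenerate case $a_m=c_m$, where $L_m$ is the identity), while the case $j<m$ consumes the distinctness of the $a$'s together with, via injectivity of $\Pi_{m-1}$, the distinctness of the $b$'s --- your identification of this step as the crux is accurate. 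Note that your argument also quietly corrects the printed statement: as literally written, the conclusion $L_{s-1}\cdots L_1(b_s)\neq a_1,\ldots,a_{k-1}$ is false for $s<k$ (take $k\ge 3$ and $b_2=a_2$ with $a_2\notin\{a_1,b_1\}$, so that $L_1(b_2)=a_2$ while $b_2\neq b_1$); what is true, what you prove, and what the stepwise evaluation of $\tr(\rho^2)$ in Theorem~\ref{thm8} actually needs at stage $s$ is the prefix version $\Pi_{s-1}(b_s)\neq a_1,\ldots,a_{s-1}$, the literal claim being recovered at $s=k$. Your closing paragraph, which performs the collapse by summing over the $b$'s (innermost index first) rather than over the $a$'s as the paper's proof of Theorem~\ref{thm8} does, is a consistent dual bookkeeping and reproduces the product $\prod_{s=1}^{k}(d^{2}+N-s)$ underlying Lemma~\ref{square}; the only cosmetic remark is that your appeal to Lemma~\ref{L4} there is decorative, since the admissibility of the distinguished value $\Pi_{s-1}^{-1}(a_s)$ already follows from your invariant $\Pi_{s-1}^{-1}(a_i)=b_i$ and injectivity alone.
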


In the particular case, for $k=2$, the statement of the above lemma reduces to
\begin{equation}
b_{2}\neq b_{1}\Rightarrow L_1(b_{2})\neq a_{1}.
\end{equation}%
Applying  Lemma~\ref{L15} to each sum appearing on the $RHS$ of equation for $%
\tr(\rho ^{2})$ we are in position to formulate the main result of this section:

\begin{theorem}
\label{thm8}
	Let $\rho $ be as in Definition~\ref{def9}, then 
	\begin{equation}
	\tr(\rho
	^{2})=d^{n-2k}(n-2k-1)(n-2k-2)\cdots (n-k)(d^{2}+n-2k)(d^{2}+n-2k+1)\cdots (d^{2}+n-k-1),
	\end{equation}
	or equivalently, in a more compact form
	\be
	\label{square2}
	\tr(\rho
	^{2})=d^{N-k}\frac{N!}{(N-k)!}\frac{(d^2+N-1)!}{(d^2+N-k-1)!},
	\ee
	where $N=n-k$ is number of port in MPBT protocol.
\end{theorem}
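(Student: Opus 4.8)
The plan is to insert the closed form of Corollary~\ref{cor9} into the definition of $\tr(\rho^2)$ and then collapse the resulting multi-index sum by peeling off one index pair $(a_s,b_s)$ at a time. Substituting~\eqref{formula} gives
\begin{equation}
\tr(\rho^2)=d^{n-2k}\,S_k,\qquad S_k=\sum_{a_i}\sum_{b_i}\prod_{s=1}^{k}d^{2\delta_{a_s,\,L_{s-1}\cdots L_1(b_s)}},
\end{equation}
where $(a_1,\dots,a_k)$ and $(b_1,\dots,b_k)$ each run over pairwise distinct tuples in $\{1,\dots,n-k\}$, abbreviated $\{1,\dots,N\}$, and the transpositions $L_s$ are those of Notation~\ref{notation}. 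The goal is to show $S_k=\frac{N!}{(N-k)!}\,\frac{(d^2+N-1)!}{(d^2+N-k-1)!}$, which combined with $n-2k=N-k$ yields~\eqref{square2}.

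The key structural observation is that both $a_k$ and $b_k$ enter a single factor of the product, namely the $s=k$ one: the composition $L_{k-1}\cdots L_1$ is built only from $a_1,\dots,a_{k-1}$ and $b_1,\dots,b_{k-1}$, so $a_k$ appears nowhere in it. First I would carry out the summation over $a_k$ with all other indices fixed. Writing $c:=L_{k-1}\cdots L_1(b_k)$, this is $\sum_{a_k}d^{2\delta_{a_k,c}}$ over the $N-(k-1)$ values of $a_k$ distinct from $a_1,\dots,a_{k-1}$. Here Lemma~\ref{L15} is decisive: it guarantees $c=L_{k-1}\cdots L_1(b_k)\neq a_1,\dots,a_{k-1}$, so the target $c$ is always an admissible value of $a_k$ and the Kronecker delta fires exactly once. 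The sum therefore equals $d^2+(N-k)$, \emph{uniformly} in $b_k$ and in the remaining fixed indices.

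Once the $a_k$-sum has been performed, the surviving product $\prod_{s=1}^{k-1}d^{2\delta_{a_s,L_{s-1}\cdots L_1(b_s)}}$ no longer contains $b_k$, so the sum over $b_k$ merely counts its $N-(k-1)$ admissible values and contributes a factor $N-k+1$. This produces the recursion
\begin{equation}
S_k=(d^2+N-k)(N-k+1)\,S_{k-1},\qquad S_0=1,
\end{equation}
whose solution is $S_k=\prod_{j=1}^{k}(d^2+N-j)(N-j+1)=\frac{(d^2+N-1)!}{(d^2+N-k-1)!}\cdot\frac{N!}{(N-k)!}$. Multiplying by the prefactor $d^{n-2k}=d^{N-k}$ gives the compact form~\eqref{square2} and, after expanding the falling factorials, the first displayed expression of the theorem.

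I expect the main obstacle to be the uniform evaluation of the $a_k$-sum, i.e. the assertion that the delta fires exactly once regardless of the already-chosen indices. This hinges entirely on Lemma~\ref{L15}: were $L_{k-1}\cdots L_1(b_k)$ allowed to collide with one of $a_1,\dots,a_{k-1}$, the count would drop and the nested sums would fail to factor. Thus the whole evaluation rests on the combinatorial control of how the iterated transpositions $L_s$ relocate the labels, which is precisely the technical input supplied by Lemma~\ref{L15} (itself proved by tracking the action of the transpositions in Notation~\ref{notation} together with the identity of Lemma~\ref{L4}). Everything else is a routine telescoping of falling factorials.
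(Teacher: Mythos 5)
Your proof is correct and follows essentially the same route as the paper: substitute the trace formula of Corollary~\ref{cor9} into $\tr(\rho^2)$, then use Lemma~\ref{L15} to guarantee the Kronecker delta fires exactly once when summing over the last $a$-index, yielding the factors $(d^2+N-k)$ and $(N-k+1)$ per peeled pair. The paper carries this out explicitly only for $k=2$ and remarks that the general case ``is the same but we have more steps''; your recursion $S_k=(d^2+N-k)(N-k+1)S_{k-1}$ is precisely a clean formalization of that induction.
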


\begin{proof}
	In order to better explain the role of Lemma~\ref{L15} in the proof, let us consider first the case of $k=2$, where we have 
	\begin{equation}
	\tr(\rho ^{2})=d^{n-4}\sum_{b_{i}}\left( \sum_{a_{i}}d^{2\delta _{a_{2},\tau
			_{(a_{1},b_{1})}(b_{2})}}d^{2\delta
		_{a_{1,}b_{1}}}\right) =d^{n-4}\sum_{b_{i}}\left( \sum_{a_{1}}\sum_{a_{2}\neq
		a_{1}}d^{2\delta _{a_{2},\tau _{(a_{1},b_{1})}(b_{2})}}d^{2\delta
		_{a_{1,}b_{1}}}\right) ,
	\end{equation}%
	where $1\leq a_{1},a_{2}\leq n-2$ and $1\leq b_{1},b_{2},\leq n-2$ are
	pairwise different. Now using Lemma~\ref{L15} we get 
	\begin{equation}
	\sum_{a_{2}\neq a_{1}}d^{2\delta _{a_{2},\tau
			_{(a_{1},b_{1})}(b_{2})}}=d^{2}+n-4,
	\end{equation}%
	since $a_{2}$ runs over the set $%
	\{1,2,\ldots,n-2\}\backslash \{a_{1}\}$ exactly once is equal to the number $%
	\tau _{(a_{1},b_{1})}(b_{2})\neq a_{1}$ and $n-4$ times is not equal to $%
	\tau _{(a_{1},b_{1})}(b_{2}).$ This implies
	\begin{equation}
	\begin{split}
	\tr(\rho ^{2})&=d^{n-4}(d^{2}+n-4)\sum_{b_{1}}\sum_{b_{2}\neq
		b_{1}}\left( \sum_{a_{1}}d^{2\delta
		_{a_{1,}b_{1}}}\right) =d^{n-4}(n-3)(d^{2}+n-4)\sum_{b_{1}=1}^{n-2}%
	\sum_{a_{1}=1}^{n-2}d^{2\delta _{a_{1,}b_{1}}}\\
	&=d^{n-4}(n-2)(n-3)(d^{2}+n-3)(d^{2}+n-4).
	\end{split} 
	\end{equation}%
	For general case the way of proving is the same but we have more steps.
\end{proof}
We illustrate the statement of
Lemma~\ref{thm8} by two following examples:
\begin{example}
	Let $k=1$, then 
	\begin{equation}
	\tr(\rho ^{2})=d^{n-2}(n-1)(d^{2}+n-2).
	\end{equation}%
	This expression can be also evaluated using group-theoretic approach, see the proof of Theorem 52 on page 28 in~\cite{MozJPA}.
\end{example}

\begin{example}
	Let $k=2$, then 
	\begin{equation}
	\tr(\rho ^{2})=d^{n-4}(n-2)(n-3)(d^{2}+n-3)(d^{2}+n-4).
	\end{equation}
Here, the group-theoretic approach would demand new summation rules for the irreducible representations, which are yet not know.
\end{example}

All the above described considerations are in fact proof of Lemma 1 from the main text. Indeed, taking expression~\eqref{square2} and~\eqref{trrho} we write the following for the normalised operator $\overline{\rho}$ from~\eqref{norm}:
\begin{equation}
\begin{split}
\tr(\overline{\rho}^2)&=\frac{\tr(\rho^2)}{(\tr(\rho))^2}
=\frac{d^{N-k}}{d^{2N}}\frac{\frac{N!}{(N-k)!}\frac{(d^2+N-1)!}{(d^2+N-k-1)!}}{\left(\frac{N!}{(N-k)!}\right)^2}\\
&=d^{-N-k}\left[\frac{N!}{ (N-k)!}\right]^{-1} \frac{ (d^2 + N - 1)!}{(d^2 + N - k - 1)!}\\
&=d^{-N-k}\binom{N}{k}^{-1}\binom{d^2+N-1}{k}.
\end{split}
\end{equation}

Now, using relation $n=N+k$ and expression~\eqref{trrho} we rewrite~\eqref{rhs} as
\be
\label{1a}
\begin{split}
F\geq \frac{1}{d^{N+k}\tr(\overline{\rho}^2)}=\binom{N}{k}\binom{d^2+N-1}{k}^{-1}.
\end{split}
\ee
which recovers the first bound (9) from  Theorem 2 in the main text. To prove the second bound in (9) from the same theorem let us observe the following:
\be
\label{1b}
\begin{split}
\binom{N}{k}\binom{d^{2}+N-1%
}{k}^{-1}&= \prod\limits_{s=0}^{k-1}\left(1-\frac{d^2-1}{d^2+N-1 - s}\right)\geq \left(1-\frac{d^2-1}{d^2 + N - k}\right)^k.
\end{split}
\ee

Now, let us prove that the entanglement fidelity $F$ scales at least as $1-O(1/N)$ and goes to 1 with $N\rightarrow \infty$. To see this is is enough to apply the  Bernoulli inequality $(1-x)^k\geq 1-k x$ (valid for $x<1$) to $\left(1-\frac{d^2-1}{d^2-N-k}\right)^k$.

There is also an alternative proof of the fidelity scaling, showing interesting connection with the symmetric polynomials. To show it, first we have to  derive a different  lower bound for the entanglement fidelity exploiting symmetric polynomials. Defining variables $x_s\equiv 1/(d^2+N-s-1)$ for $s=0,\ldots,k-1$, and completely symmetric polynomials $S_l(x_1,x_2,\ldots,x_k)$ in variables $x_1,x_2,\ldots,x_k$, for $l\in\mathbb{N}$. For example we have:
\be
S_0(x_1,x_2,\ldots,x_k)=1,\quad S_1(x_1,x_2,\ldots,x_k)=x_1+x_2+\ldots +x_k,\quad S_2(x_1,x_2,\ldots,x_k)=\sum_{1\leq j\leq l\leq k}x_jx_l\quad \text{etc.}
\ee
Using the above, we can write the bound~\eqref{1a} with~\eqref{1b} as
\be
F\geq  \prod\limits_{s=0}^{k-1}\left(1-\frac{d^2-1}{d^2+N-s-1}\right)=\sum_{l=0}^k (-1)^l (d^2-1)^lS_l(x_1,x_2,\ldots,x_k).
\ee
To evaluate the leading term in obtained bounds on the entanglement fidelity we have to use observation that we can  write the right-hand side of (9) in terms of symmetric polynomials $S_l(x_1,\ldots,x_k)$ with $x_s=1/(d^2+N-s-1)$ for $s=0,\ldots,k-1$:
\be
\label{bound2a1}
\begin{split}
F&\geq  \sum_{l=0}^k (-1)^l (d^2-1)^lS_l(x_1,x_2,\ldots,x_k)=1-\sum_{s=0}^{k-1} \frac{d^2-1}{d^2+N-s-1}+O(1/N^2).
\end{split}
\ee
We then see from \eqref{bound2a1} that when $k$ is fixed and $N$ goes to infinity, the right hand side  approaches $1$. 

In the end of this appendix we briefly explain the flow of the reasoning presented~\cite{strelchuk_generalized_2013}. The idea presented there is also based on the state discrimination problem and computing $\tr \overline{\rho}^2$. However, argument for computing trace from the composition $\sigma_{\mathbf{i}}\sigma_{\mathbf{j}}$, for maximally entangled pairs fully overlapping, does not take into account that the resulting operators are  in general permutation operators with different number of disjoint cycles. The final trace depends on the number of disjoint cycles and cannot be characterised only by global parameters, but has to take into account the interior structure of permutations. We illustrate this in the example below.
\begin{example}
\label{ex:contr}
Let us consider two signals for $n=6, k=2$, and arbitrary $d$ of the form
\be
\sigma_1=\frac{1}{d^4}V^{(2)}[(45)(36)],\qquad  \sigma_2=\frac{1}{d^4}V^{(2)}[(35)(46)].
\ee
Trace from their overlap, according to Corollary~\eqref{cor9}, equals to
\be
\begin{split}
\tr\left(\sigma_1\sigma_2\right)=\frac{1}{d^8}\tr\left(V^{(2)}[(45)(36)]V^{(2)}[(35)(46)]\right)=\frac{d^2}{d^8}d^{2\delta_{4,L_1(3)}}d^{\delta_{3,4}}=\frac{d^2}{d^8}d^{2\delta_{4,4}}=\frac{1}{d^4},
\end{split}
\ee
since $L_1(3)=(3,4)[3]=4$ (see Notation~\ref{notation}). However, in~\cite{strelchuk_generalized_2013} authors argue that the value of $\tr\left(\sigma_1\sigma_2\right)$ equals to $1/d^6$, or in general $1/d^{N+k}$, which is in contradiction with mentioned Corollary~\ref{cor9}.
Note that this particular example can be evaluated directly, without referring to the corollary, by exploiting, in particular, the simple fact, that 
$\tr A^\Gamma B^\Gamma=\tr AB$ 
for operators $A,B$, where $\Gamma$ is partial transpose, and 
the fact that trace of operator of permutation is equal to $d^c$ where $c$ is number of cycles of the permutation, including trivial cycles. 
\end{example}
Fortunately, some part of argumentation from~\cite{strelchuk_generalized_2013} can be saved. Namely, quantity $1/d^{N+k}$ always lower bounds $\tr(\sigma_{\mathbf{i}}\sigma_{\mathbf{j}})$ for signals with overlapping maximally entangled states. This means that the result in~\cite{strelchuk_generalized_2013}  actually gives lower bound on $\tr \overline{\rho}^2$ and then on entanglement fidelity. Here, we do not compute bound on $\tr \overline{\rho}^2$, but we evaluate it explicitly. This leads us to more effective and elegant lower bound on efficiency of the deterministic MPBT.

\section{Derivation of the entanglement fidelity and probability of success for qubits}
\label{AppC}
Before we present here the main consideration leading us to closed expressions for the entanglement fidelity and average probability of success, we introduce basic ideas concerning representation theory of permutation group $S(n)$. For more details we refer reader to~\cite{Fulton1991-book-rep,FultonTab}

Every partition (shape) of natural number $n$ is a sequence $\alpha=(\alpha_1,\ldots,\alpha_r)$ satisfying
\be
\label{partition}
\forall_{i} \  \alpha_i \geq 0,\quad \alpha_1\geq \alpha_2\geq \ldots \geq \alpha_r,\quad \sum_{i=i}^r\alpha_i=n,
\ee
where $r\in \{1,\ldots,n\}$. Every such partition corresponds with some frame, which is called {\bf Young frame}. The Young frame associated with partition $\mu$ is the array formed by $n$ boxes with $l$ left-justified rows.
\begin{example}
Let us take $n=4$. Allowed partitions are $\alpha_1=(4), \alpha_2=(3,1), \alpha_3=(2,2), \alpha_4=(2,1,1), \alpha_5=(1,1,1,1)$ and they correspond to the following Young frames:
\be
\label{ex}
\yng(4) \quad \yng(3,1) \quad \yng(2,2) \quad \yng(2,1,1) \quad \yng(1,1,1,1)
\ee
\end{example}
Any irreducible representation (irrep) of symmetric group $S(n)$ is labelled by Young frames with $n$ boxes. From above example we see that for $n=4$ we have five possible irreps - because we have only five different Young frames. 

A {\bf Standard Young Tableaux (SYT)} is a Young frame of shape $\alpha$ of $n$ objects, such that labels occur increasing in every row from left to the right, and increasing in every column from top to the downwards. One can see that every number between 1 and $n$ occurs only once.
\begin{example}
For the case $n=4$ and $\alpha=(2,1,1)$ we have three possible fillings
\be
\label{c1}
\young(12,3,4)\qquad \young(13,2,4) \qquad \young(14,2,3)
\ee
\end{example}
The number of STYs for given shape $\alpha$ and natural number $n$ can be obtained by combinatorial rules.  It turns out that the dimension $d_{\alpha}$ of an irreducible representation of symmetric group $S(n)$ for fixed $\alpha$ is equal to the number of SYTs.

A {\bf semi-standard Young Tableaux (SSYT)} is a Young frame of shape $\alpha$ of $n$ boxes filled with numbers $\{1,2,\ldots,s\}$ for some $s$, such that rows are weakly and columns are strictly increasing.
\begin{example}
For the case $n=4$, $\alpha=(3,1)$, and set $\{1,2\}$, we have three possible fillings
\be
\label{c1}
\young(111,2)\qquad \young(112,2) \qquad \young(122,2).
\ee
\end{example}
The total number of SSYTs also can be computed by combinatorial rules, and this number equals to the multiplicity $m_{\alpha}$ in the space $(\mathbb{C}^d)^{\ot n}$ by plugging $s=d$. One can notice, that whenever $d$ is smaller than the length of the longest column in $\alpha$ the corresponding multiplicity equals zero. This means that such irrep does not occur into decomposition. For example considering the qubit case, when $d=2$, only three first Young frames from~\eqref{ex} appear. More formally, considering representations of $S(n)$ on space $(\mathbb{C}^d)^{\ot n}$, where $d$ denotes dimension, the irreducible representations of $S(n)$ are labelled by Young frames with $n$ boxes, restricted to the ones with at most $d$ rows. We denote this fact by $\alpha \vdash_d n$ (in short $\alpha \vdash n$ when it is clear from the context). 

Additionally, the symbol  $\mu \in \alpha$ denotes a Young frames $\mu \vdash n$ obtained from a Young frame $\alpha \vdash n-k$ by adding $k$ boxes one by one, getting in every step a valid Young frame. While by the symbol $\alpha \in \mu$ we denote Young frames $\alpha \vdash n-k$ obtained from a Young frame $\mu \vdash n$ by subtracting $k$ boxes. Such a procedure can be performed on many different ways, and total number of them we denote as $m_{\mu/\alpha}$. In the general situation we do not have closed expression for computing $m_{\mu/\alpha}$, however in the qubit case (Young frames up to two rows) we have \cite{aitken_1943,xxx,adin2014enumeration}
\be
\label{num1}
m_{\mu/\alpha} = k!\operatorname{det}\left(\frac{1}{\alpha_i - \mu_j  - i + j}\right)_{i,j=1,2}.
\ee

Having the above description of all the necessary objects and facts, we are in position to prove expressions for the entanglement fidelity $F$ and averaged probability of success $p_{succ}$ form the main text.
In general, for qudits, the formulas are expressed in terms of: 
\begin{enumerate}[i)]
\item  dimensionality $d_{\mu}$ of the irrep $\mu$
\item its multiplicity $m_{\mu}$ in the representation of $S(n)$ permuting $n$ qdits.
\item numbers $m_{\mu/\alpha}$ describing number of possibilities of obtaining a frame $\mu \vdash n$ from a frame $\alpha \vdash n-k$ by adding $k$ boxes, 
\end{enumerate}
and have the following form for the entanglement fidelity~\cite{Stu2020}:
\be
	\label{eq:fidelity}
	F=\frac{1}{d^{N+2k}}\sum_{\alpha \vdash N-k}\left(\sum_{\mu\in\alpha}m_{\mu/\alpha} \sqrt{m_{\mu}d_{\mu}}\right)^2.
	\ee
	and the averaged probability of success~\cite{Stu2020}:
	\be
	\label{eq:p_succ}
	p_{succ}=\frac{1}{d^N}\sum_{\alpha \vdash N-k}m_{\alpha}^2\mathop{\operatorname{min}}\limits_{\mu\in\alpha}\frac{d_{\mu}}{m_{\mu}}.
	\ee

For qubits we shall see that these quantities are expressed in more familiar language of angular momentum.
Having a Young frame $\alpha \vdash N-k$  we denote the Young frame $\mu \vdash N$ obtained from $\alpha$ by adding consecutive $k$ boxes 
 by $\mu\in\alpha$. The number of such allowed additions is $m_{\mu/\alpha}$. 
The multiplicity $m_{\mu}$ equals dimension of spin-$j$ representation of $SU(2)^{\otimes(N-k)}$, where $2j=\mu_1-\mu_2$ and thus $m_\mu=2j+1$; the dimension $d_{\mu}$ corresponds to multiplicity of spin-$j$ representation, given by $\frac{(2j+1)(N-k)!}{((N-k)/2 + j +1)!((N-k)/2-j)!}$. 
Finally, expression~\eqref{num1} can be expressed as 
\be
m_{2s,2j,k}={ k \choose s -j+k/2}-{k \choose  s + j + k/2 + 1}.
\ee
Thus, considering that spin of $N-k$ spin-$\frac{1}{2}$ particles can take values from 0 or $1/2$ depending on the parity of $N-k$ to $\frac{N-k}{2}$ and after adding $k$ particles either from $0(\frac{1}{2})$ or $s-k/2$, whichever quantity is greater, to $s+k/2$,
We can rewrite  expression (\ref{eq:fidelity}) as
	\be
	\label{F_qubits}
	F = \frac{1}{2^{N+2k}}\sum_{s=0(\frac{1}{2})}^\frac{N-k}{2}\left( \sum_{j=\operatorname{max}\{0(\frac{1}{2}), s-\frac{k}{2}\}}^{s+\frac{k}{2}} \left({ k \choose s -j+k/2}-{k \choose s + j + k/2 + 1}\right)\sqrt{\frac{(2j+1)^2N!}{(N/2-j)!(N/2+j+1)!}}\right)^2.
\ee
Defining auxiliary quantity $h_{sjk}$ of the form
\begin{align}
    h_{sjk}=
    { k \choose s -j+k/2}-{k \choose s + j + k/2 + 1},
\end{align}
and simplifying expression under the square root, we arrive to the final formula for the fidelity:
\begin{align}
    \label{F_qubits}
    F = \frac{1}{2^{N+2k}}\sum_{s=0(\frac{1}{2})}^\frac{N-k}{2}\left( \sum_{j=\operatorname{max}\{0(\frac{1}{2}), s-\frac{k}{2}\}}^{s+\frac{k}{2}}
    \frac{2j+1}{N+1}
    \sqrt{{N+1 \choose \frac{N}{2}-j}}\,  h_{sjk}
    \right)^2
\end{align}
On Figure~\ref{FPBT21b} we show comparison of the first bound (9) from Theorem 2 of the main text (Appendix~\ref{AppB}) with the exact expression from expression~\eqref{F_qubits}.
\begin{figure}[!]
	\begin{centering}
		\includegraphics[width=0.5\textwidth]{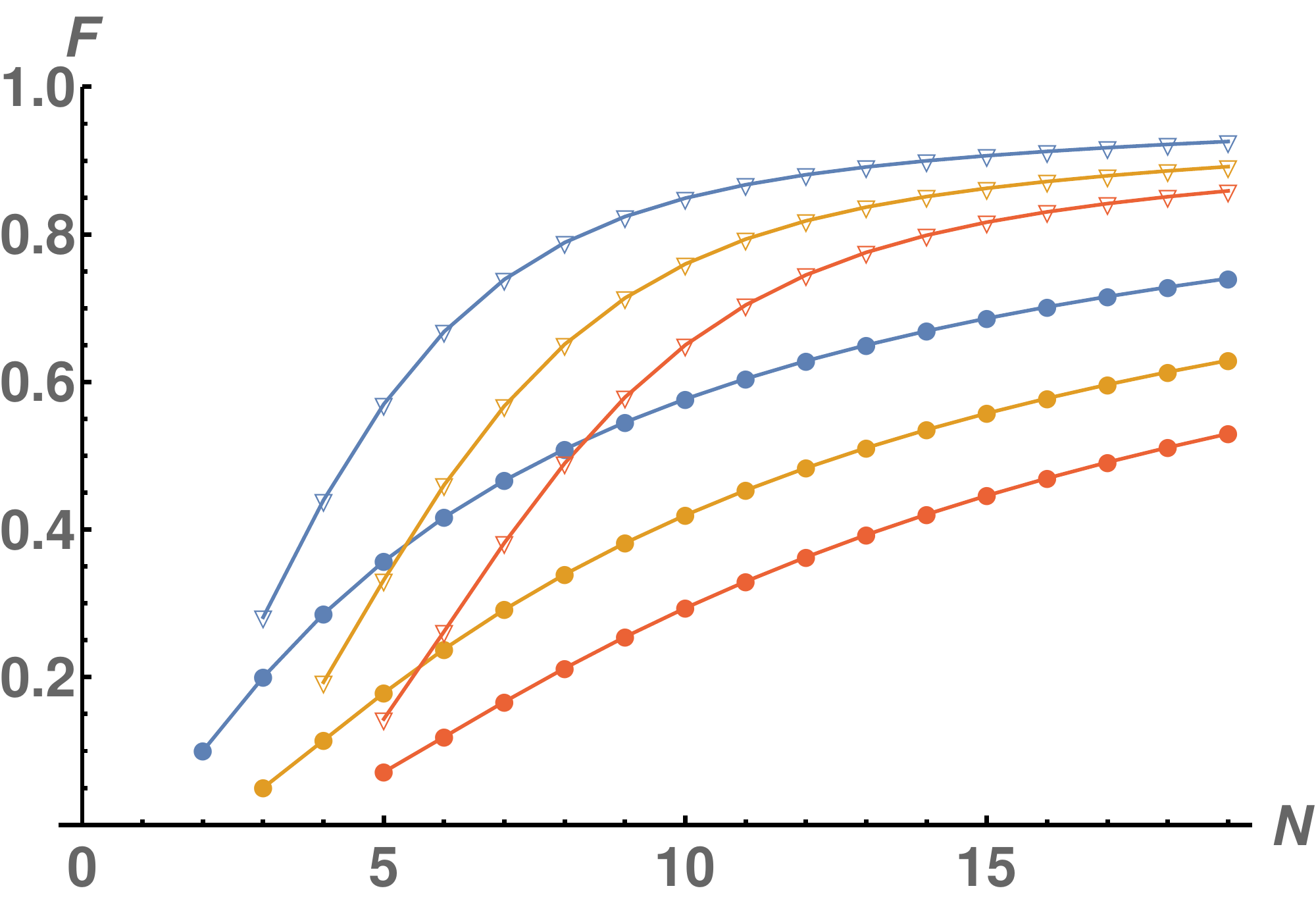}
		\caption{Comparison of the first bound (9) on entanglement fidelity from Theorem 2 in the main text (full circles (full circles, for $k=2$ marked with \protect\raisebox{-.5mm}{\protect\includegraphics[height=3mm]{circ_blue.png}}, $k=3$ \protect\raisebox{-.5mm}{\protect\includegraphics[height=3mm]{circ_orange.png}}, $k=4$		\protect\raisebox{-.5mm}{\protect\includegraphics[height=3mm]{circ_red.png}})) with exact values for the qubit case (reversed triangle markers, for $k=2$ marked with \protect\raisebox{-.5mm}{\protect\includegraphics[height=4mm]{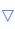}}, $k=3$ \protect\raisebox{-.5mm}{\protect\includegraphics[height=4mm]{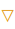}}, $k=4$ \protect\raisebox{-.5mm}{\protect\includegraphics[height=4mm]{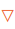}}).
} 
		\label{FPBT21b}%
	\end{centering}
\end{figure}

In case of probability of success, using directly the argumentation from Section V, we rewrite expression \eqref{eq:p_succ} as:
\be
	p_{succ}(N, k)= \frac{1}{2^{N}}\sum_{s=0(\frac{1}{2})}^\frac{N-k}{2} \min_{j} \frac{N!(2s+1)^2 }{(N/2-j)!(N/2+j+1)!}=\frac{1}{2^{N}}\sum_{s=0(\frac{1}{2})}^\frac{N-k}{2}\frac{N!(2s+1)^2}{(N/2-s-k/2)!(N/2+s+k/2+1)!}.
\ee
Observing that
\be
\frac{N!}{(N/2-s-k/2)!(N/2+s+k/2+1)!}=\frac{1}{N+1}\binom{N+1}{\frac{N-k}{2}-s}.
\ee
we get the final form given in by the following equation 
\be
\label{exp50}
p_{succ}(N, k)=\frac{1}{2^{N}}\sum_{s=0(\frac{1}{2})}^\frac{N-k}{2}\frac{1}{N+1}\binom{N+1}{\frac{N-k}{2}-s}.
\ee
 In Section~\ref{finite} of this appendix we discuss additional properties of probability of success $p_{succ}$.
 
\section{Bounds for finite number of ports $N$}
\label{finite}
To achieve our goals in this section we have combined in a non-trivial way advanced tools emerging  from statistical analysis. Motivated by Berry-Essen Theorem~\cite{endriu,essen0}, which quantifies the rate of convergence to normal distribution, together with  Central Limit Theorem type reasoning we express probability of success in terms of Gaussian integrals. In particular, starting from expression~\eqref{exp50} in the finite case, we provide useful lower and upper bounds on probability of success of the perfect transmission in MPBT protocol. We can observe, that it closely resembles the expression for the probability distribution function of the distribution of number of successes in $N+1$ Bernoulli trials each with $p=1/2$. We start from proving the following:
\begin{proposition}
\label{prep:bnd}
Let $k=a\sqrt{N}, a\in(0,2).$ Then
    \begin{align}
2\left(\int_0^\infty x^2\frac{1}{\sqrt{2\pi}}\operatorname{e}^{-\frac{(x+a\sqrt{\frac{N}{N+1}})^2}{2}}dx - M(N) - I_1^B(N) -I_2^B(N) \right)-\delta^B(N) \leq p_{succ}
\leq &2\left(\int_0^\infty x^2\frac{1}{\sqrt{2\pi}}\operatorname{e}^{-\frac{(x+a\sqrt{\frac{N}{N+1}})^2}{2}}dx+ M(N)\right), 
\end{align}
where
\begin{equation}
\begin{split}
M(N)&=\frac{2}{\operatorname{e}\sqrt{N+1}}\frac{1}{\sqrt{2\pi}},\quad I_1^B(N)=(N+1)^{-\frac{3}{2}}\frac{1}{\sqrt{2\pi}}, \\ I_2^B(N)&=\left(\frac{\sqrt{N+1}
        }{\sqrt{2\pi}} + 1\right)\operatorname{e}^{-\frac{\left(\sqrt{N+1}-
        1\right)^2}{2}},\\
    \delta^B(N)&=4N^{-1/4}+11\sqrt{N+1}\operatorname{e}^{-\frac{\left(N^{5/8}-1\right)^2}{N+1}}.
\end{split}
\end{equation}
\end{proposition}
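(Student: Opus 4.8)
The plan is to start from the exact qubit expression~\eqref{exp50} for $p_{succ}$ and to recognise the binomial weights as a rescaled De~Moivre--Laplace (local central limit) problem, so that the sum converges to the claimed Gaussian integral with a quantitatively controlled error supplied by the Berry--Essen theorem~\cite{endriu,essen0}. Writing $\tfrac{1}{2^N}=\tfrac{2}{2^{N+1}}$, carrying the weight $(2s+1)^2$, and setting $p_{N+1}(j):=2^{-(N+1)}\binom{N+1}{j}$, the point mass function of $\mathrm{Bin}(N+1,\tfrac12)$, the formula reads
\begin{equation}
p_{succ}=\frac{2}{N+1}\sum_{s}(2s+1)^2\, p_{N+1}\!\left(\tfrac{N-k}{2}-s\right).
\end{equation}
The natural substitution is $x=\tfrac{2s+1}{\sqrt{N+1}}$, with uniform mesh $\Delta x=\tfrac{2}{\sqrt{N+1}}$ and shift $c=\tfrac{k}{\sqrt{N+1}}=a\sqrt{\tfrac{N}{N+1}}$; since $2\big(\tfrac{N-k}{2}-s\big)-(N+1)=-(2s+1+k)$, the centred-and-scaled index equals $-(x+c)$, so each summand carries weight $x^2$ and Gaussian factor $e^{-(x+c)^2/2}$, and $\Delta x\sum_s(\cdots)$ formally reproduces $2\int_0^\infty x^2\tfrac{1}{\sqrt{2\pi}}e^{-(x+c)^2/2}\,dx$. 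The four error contributions then account, in order, for the normal approximation of the binomial ($\delta^B$), the discretisation ($M$), and two boundary corrections ($I_1^B$, $I_2^B$).

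The heart of the argument is the passage from $p_{N+1}(j)$ to its Gaussian value. I would invoke the local form of De~Moivre--Laplace, derived from the Berry--Essen bound for $\mathrm{Bin}(N+1,\tfrac12)$, whose standardised summands have $\sigma^2=\tfrac14$ and third absolute moment $\tfrac18$, giving the optimal $O(1/\sqrt{N+1})$ rate at the level of the cumulative distribution function. To control the \emph{weighted} sum I would split the range of $s$ at a threshold of order $N^{5/8}$: on the bulk $2s+1\lesssim N^{5/8}$ the standardised index stays within the central region, where the local approximation is accurate and can be integrated against the slowly varying weight $x^2$; on the complementary tail both the exact binomial mass and the Gaussian integrand are super-polynomially small, bounded by a binomial tail estimate and a Gaussian tail estimate respectively. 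Assembling these gives $\delta^B(N)=4N^{-1/4}+11\sqrt{N+1}\,e^{-(N^{5/8}-1)^2/(N+1)}$, the first term being the Berry--Essen error surviving the $x^2$-weighting and the second the tail beyond the cutoff.

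It then remains to compare the truncated Riemann sum with the integral. The discretisation error is handled by a Riemann-sum estimate with mesh $\Delta x=2/\sqrt{N+1}$, bounding $\big|\Delta x\sum_s h(x_s)-\int_0^\infty h\big|$ for $h(x)=x^2 e^{-(x+c)^2/2}$ in terms of $\Delta x$ and the extremum of the integrand; since $\max_{x\ge0}x^2 e^{-x^2/2}=2/e$ (attained at $x=\sqrt2$), this produces exactly the symmetric term $M(N)=\tfrac{2}{e\sqrt{N+1}}\tfrac{1}{\sqrt{2\pi}}$ appearing on both sides. The one-sided lower-bound corrections $I_1^B(N)=(N+1)^{-3/2}\tfrac{1}{\sqrt{2\pi}}$ and $I_2^B(N)=\big(\tfrac{\sqrt{N+1}}{\sqrt{2\pi}}+1\big)e^{-(\sqrt{N+1}-1)^2/2}$ come respectively from the endpoint contribution at $x\approx 1/\sqrt{N+1}$ near the origin, and from the portion of the infinite Gaussian integral lying beyond the maximal physical value $x_{\max}=(N-k+1)/\sqrt{N+1}\approx\sqrt{N}$ that the finite sum cannot reach, which explains the factor $e^{-(\sqrt{N+1}-1)^2/2}\sim e^{-N/2}$. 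Keeping track of signs of all corrections then yields the two-sided bound as stated; the hypothesis $a\in(0,2)$ guarantees that $c$ stays bounded and the dominant mass of the sum remains in the central region where every estimate above is valid.

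The step I expect to be the main obstacle is the second paragraph: transferring a cumulative (Kolmogorov-distance) Berry--Essen bound, naturally stated for the distribution function, into a sharp estimate for the $x^2$-\emph{weighted} point masses. The growing weight amplifies the local error, so a naive use of the $O(1/\sqrt{N})$ rate is insufficient; one must either perform an Abel summation that trades the weight for a discrete derivative and integrate the CDF error against it, or use a local limit theorem with uniform relative error on the bulk, choosing the $N^{5/8}$ cutoff so that the blow-up of the local error and the smallness of the tail balance. Precisely this balance is what degrades the rate from $N^{-1/2}$ to the $N^{-1/4}$ visible in $\delta^B(N)$.
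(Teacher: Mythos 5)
You follow essentially the paper's route---the same exact qubit formula, the same rescaling to an $x^2$-weighted sum of $\mathrm{Bin}(N+1,\tfrac12)$ point masses, the same $N^{5/8}$ bulk/tail split producing $\delta^B$, the same Riemann-sum comparison via $\max_{x\ge 0}x^2e^{-x^2/2}=2/e$ for $M(N)$, and the same boundary terms $I_1^B,I_2^B$ from extending the integration range to $[0,\infty)$---but there is a genuine gap that blocks the statement as claimed: the \emph{asymmetry} of the two bounds. Berry--Essen is a two-sided estimate, so your construction inevitably attaches $\delta^B(N)$ (and the boundary corrections) to the upper bound as well, yielding something strictly weaker than the claimed $p_{succ}\le 2\left(\int_0^\infty x^2\tfrac{1}{\sqrt{2\pi}}e^{-(x+a\sqrt{N/(N+1)})^2/2}\,dx + M(N)\right)$. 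The paper's upper bound uses no CLT approximation at all: it rests on a one-sided pointwise domination (Lemma 7.2 of \cite{phdthesis}), namely $\Pr\left(X=\tfrac{N-k}{2}-s\right)\le \Phi_\sigma\left(-\tfrac{k}{2}-s-\tfrac12,\,-\tfrac{k}{2}-s+\tfrac12\right)$ for $X\sim B(N+1,\tfrac12)$, valid whenever $k+2s\ge 1$; i.e., each relevant binomial mass is dominated outright by the Gaussian measure of its unit interval. With that, the Gaussian Riemann sum exactly dominates $p_{succ}$, the sum-to-integral step costs only $M(N)$, and enlarging the integral to $[0,\infty)$ is free for an upper bound. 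Your proposal contains no substitute for this ingredient, and ``keeping track of signs'' cannot produce it.

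Second, the step you yourself single out as the main obstacle---converting the cumulative Berry--Essen bound into a bound on the $x^2$-weighted point-mass errors over the bulk---is left as two candidate strategies (Abel summation, or a local limit theorem with uniform relative error) without either being executed, so the core of $\delta^B(N)$ is a declared intention rather than a proof. For comparison, the paper's resolution is cruder and more elementary than both suggestions: on the bulk $s\le m(N)=\tfrac12(N^{5/8}-1)$ it bounds every weight by the maximal one, $(2m(N)+1)^2\sim N^{5/4}$, and multiplies by the summed point-mass error $\sum_s\delta_s\le 2/\sqrt{N+1}$ inherited from Berry--Essen, which is precisely where $4N^{-1/4}$ comes from; the tail $s>m(N)$ is killed by a Chernoff bound on the binomial mass plus a Hoeffding bound on the Gaussian piece, much as you describe. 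So your outline points at the right decomposition, but without carrying out the bulk estimate and without the pointwise domination lemma, neither the stated lower-bound constant $\delta^B(N)$ nor the stated clean upper bound is actually established.
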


The reasoning above applies to finite number of ports. We can see the convergence of these bounds in Figure~\ref{fig:bnd}
\begin{figure}[h]
	\begin{centering}
		\includegraphics[width=1.0\textwidth]{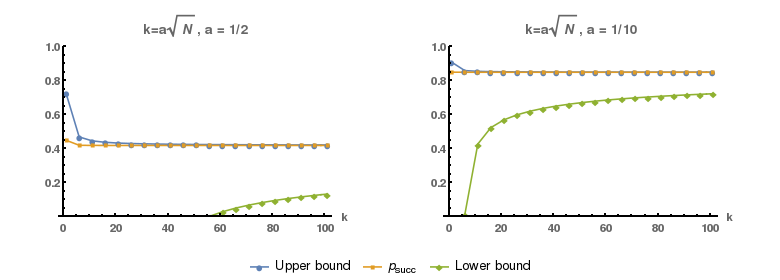}
		\caption{The figures present lower and upper bounds for $p_{succ}$ given in Proposition~\ref{prep:bnd}, for two different values of $a$, where the number of teleported qubits is given by $k~=~a\sqrt{N}$.}
		\label{fig:bnd}%
	\end{centering}
\end{figure}
Taking the limit $N\rightarrow \infty$ we can formulate the following 
\begin{corollary}
\label{col:granica1}
    The limiting value of $p_{succ}$, where $k=a\sqrt{N}, a \in(0,1), N\rightarrow\infty$ is
\be
\lim_{N \to \infty} p_{succ} = 2\int_0^\infty x^2\frac{1}{\sqrt{2\pi}}\operatorname{e}^{-\frac{(x+a)^2}{2}}dx.
\ee
\end{corollary}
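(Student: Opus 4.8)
The plan is to obtain the Corollary as an immediate consequence of the sandwiching bounds established in Proposition~\ref{prep:bnd}, via the squeeze theorem. Since $a\in(0,1)\subset(0,2)$, the two-sided estimate of Proposition~\ref{prep:bnd} applies, so it suffices to show that the lower and upper bounds share the same limit as $N\to\infty$, and that this common limit is the claimed Gaussian integral. First I would isolate the common ``main term'' $2\int_0^\infty x^2\frac{1}{\sqrt{2\pi}}\operatorname{e}^{-(x+a\sqrt{N/(N+1)})^2/2}\,\mathrm{d}x$ appearing in both bounds, and treat separately its limit and the limits of the error terms $M(N)$, $I_1^B(N)$, $I_2^B(N)$ and $\delta^B(N)$.

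Second, I would verify that every error term vanishes in the limit. The terms $M(N)$ and $I_1^B(N)$ decay polynomially (as $N^{-1/2}$ and $N^{-3/2}$ respectively, up to constants), so they tend to $0$. For $I_2^B(N)$ the polynomial prefactor $\sqrt{N+1}/\sqrt{2\pi}+1$ is overwhelmed by the Gaussian factor $\operatorname{e}^{-(\sqrt{N+1}-1)^2/2}$, whose exponent grows like $-N/2$; hence $I_2^B(N)\to 0$. For $\delta^B(N)$ the first summand is $4N^{-1/4}\to 0$, while in the second summand the exponent behaves like $-(N^{5/8})^2/N=-N^{1/4}$, so $11\sqrt{N+1}\,\operatorname{e}^{-(N^{5/8}-1)^2/(N+1)}\to 0$ as well. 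Thus both the lower bound's subtracted corrections and the upper bound's additive correction collapse to the main term in the limit.

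Third, I would pass to the limit in the main term. Because $\sqrt{N/(N+1)}\to 1$, the shift $a\sqrt{N/(N+1)}\to a$, and continuity of the map $c\mapsto\int_0^\infty x^2\frac{1}{\sqrt{2\pi}}\operatorname{e}^{-(x+c)^2/2}\,\mathrm{d}x$ (justified by dominated convergence against the integrable $N$-independent majorant $x^2\frac{1}{\sqrt{2\pi}}\operatorname{e}^{-x^2/2}$, available because the shifts $a\sqrt{N/(N+1)}$ are positive so that $(x+c)^2\geq x^2$ for $x\geq 0$) yields $\int_0^\infty x^2\frac{1}{\sqrt{2\pi}}\operatorname{e}^{-(x+a\sqrt{N/(N+1)})^2/2}\,\mathrm{d}x\to\int_0^\infty x^2\frac{1}{\sqrt{2\pi}}\operatorname{e}^{-(x+a)^2/2}\,\mathrm{d}x$. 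Combining the three steps, both bounds in Proposition~\ref{prep:bnd} converge to $2\int_0^\infty x^2\frac{1}{\sqrt{2\pi}}\operatorname{e}^{-(x+a)^2/2}\,\mathrm{d}x$, and the squeeze theorem delivers the same limit for $p_{succ}$.

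The genuinely hard work here already lives in Proposition~\ref{prep:bnd}, whose proof must convert the binomial sum~\eqref{exp50} into a Gaussian integral with explicit, controllable remainders (the Berry--Esseen-type estimates referenced in the text). Within the Corollary itself the only point demanding a little care is the uniform integrability needed to exchange limit and integral for the shifted Gaussian; once that majorant is fixed, the argument is a routine sandwich and I expect no further obstacle.
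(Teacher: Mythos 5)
Your proposal is correct and follows essentially the same route as the paper: the paper also obtains Corollary~\ref{col:granica1} directly from the two-sided bounds of Proposition~\ref{prep:bnd} by letting $N\to\infty$, with all error terms $M(N)$, $I_1^B(N)$, $I_2^B(N)$, $\delta^B(N)$ vanishing and the shift $a\sqrt{N/(N+1)}\to a$ in the main Gaussian integral. Your explicit verification of the decay rates and the dominated-convergence justification for passing the limit inside the integral merely fill in details the paper leaves implicit.
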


When we take $k=o(\sqrt{N})$, i.e. $k/\sqrt{N} \to 0$ when $N\to 0$, we can apply the bounds
(\ref{eq:bnd1})-(\ref{eq:bnd3})
substituting $a\sqrt{\frac{N}{N+1}}$ with $k/ \sqrt{N}$, and when we take the limit $N\to \infty$ we obtain
\begin{corollary}
\label{col:granica2}
    When $k=o(\sqrt{N})$ then the limit of $p_{succ}$ is
    \be
    \lim_{N \to \infty} p_{succ} = 2\int_0^\infty x^2\frac{1}{\sqrt{2\pi}}\operatorname{e}^{-\frac{x^2}{2}}dx=1.
    \ee
 \end{corollary}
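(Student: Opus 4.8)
The plan is to derive Corollary~\ref{col:granica2} directly from Proposition~\ref{prep:bnd} by a squeeze argument, exploiting the fact that the four correction terms $M(N)$, $I_1^B(N)$, $I_2^B(N)$ and $\delta^B(N)$ depend only on $N$ and carry no dependence on the scaling constant $a$. First I would reparametrise by writing $a_N := k/\sqrt{N}$, so that the hypothesis $k = o(\sqrt{N})$ is precisely the statement $a_N \to 0$; in particular $a_N \in (0,2)$ for all sufficiently large $N$, so Proposition~\ref{prep:bnd} applies with $a$ replaced by this (now $N$-dependent) value. The decisive observation is that the shift appearing in the Gaussian integral then becomes $a_N\sqrt{N/(N+1)} = k/\sqrt{N+1}$, which tends to $0$ under the hypothesis.

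Next I would pass to the limit $N \to \infty$ on both sides of the sandwich inequality. Each of $M(N)$, $I_1^B(N)$, $I_2^B(N)$, $\delta^B(N)$ vanishes as $N \to \infty$ (each being a negative power of $N$ times a constant, or a polynomial factor multiplied by a super-exponentially decaying one), so the upper and lower bounds both collapse onto
\[
\lim_{N\to\infty} 2\int_0^\infty x^2 \frac{1}{\sqrt{2\pi}} \operatorname{e}^{-\frac{(x+k/\sqrt{N+1})^2}{2}}\,dx.
\]
To evaluate this I would invoke dominated convergence: since $k/\sqrt{N+1}\to 0$ and, for every nonnegative shift $\epsilon$, the integrand is bounded on $[0,\infty)$ by the integrable function $x^2\operatorname{e}^{-x^2/2}/\sqrt{2\pi}$ (because $(x+\epsilon)^2\ge x^2$ there), the limit equals $2\int_0^\infty x^2 \frac{1}{\sqrt{2\pi}} \operatorname{e}^{-\frac{x^2}{2}}\,dx$. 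By the squeeze theorem $p_{succ}$ converges to the same value.

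Finally I would evaluate the integral. By symmetry of the integrand, $2\int_0^\infty x^2 \frac{1}{\sqrt{2\pi}} \operatorname{e}^{-x^2/2}\,dx = \int_{-\infty}^\infty x^2 \frac{1}{\sqrt{2\pi}} \operatorname{e}^{-x^2/2}\,dx$, which is the variance of a standard normal random variable and hence equals $1$. This yields the claimed limit.

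The hard part will be the very first step: justifying that Proposition~\ref{prep:bnd}, established under the fixed-$a$ scaling $k=a\sqrt{N}$, genuinely applies with $a$ replaced by the vanishing sequence $a_N$. This requires confirming that the bounds produced in the proof of that proposition are \emph{uniform} in $a$ over the admissible range (which is plausible precisely because $M$, $I_1^B$, $I_2^B$, $\delta^B$ contain no $a$), and that no strict inequality used there excludes the boundary regime $a_N \to 0$. Once this uniformity is secured, the remaining analytic steps are routine.
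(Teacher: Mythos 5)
Your proposal is correct and follows essentially the same route as the paper: the authors likewise prove this corollary by applying the bounds of Proposition~\ref{prep:bnd} with the shift $a\sqrt{N/(N+1)}$ replaced by $k/\sqrt{N}$ (valid precisely because the correction terms $M$, $I_1^B$, $I_2^B$, $\delta^B$ are independent of $a$) and then letting $N\to\infty$. Your added care about uniformity in $a$, the dominated-convergence step, and the identification of the limiting integral as the variance of a standard normal merely make explicit what the paper leaves implicit.
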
   

\begin{proof}[Proof of Proposition~\ref{prep:bnd}]
The probability of success in the multi-port teleportation scheme, for even $N$ can be written as:
\be
\label{ps}
p_s=\frac{1}{N+1}\sum_{s=0}^{\frac{N-k}{2}}(2s+1)^2\operatorname{Pr}\left(X=\frac{N-k}{2}-s \right).
\ee
Instead of using discrete probability distribution $\operatorname{Pr}(X\leq l)$ we can approximate it by Gaussian distribution $\Phi\left(\frac{l-\frac{N+1}{2}}{\frac{1}{2}\sqrt{N+1}} \right)$. The error of such estimation is known due to the  Berry-Essen Theorem~\cite{endriu,essen0}, which in our case reads as
\be
\forall \ l=0,\ldots,N+1 \quad \left|\operatorname{Pr}(X\leq l)-\Phi\left(\frac{l-\frac{N+1}{2}}{\frac{1}{2}\sqrt{N+1}} \right)  \right|\leq \frac{4c}{\sqrt{N+1}}.
\ee
The real constant $c$ can be analytically bounded from below and due to~\cite{essen} there is $c\geq \frac{\sqrt{10}+3}{6\sqrt{2\pi}}\approx0.4097$. In this paper for our purposes we set $c=1/2$.
We can rewrite $\Phi\left(\frac{l-\frac{N+1}{2}}{\frac{1}{2}\sqrt{N+1}} \right)$ as
\be
\Phi\left(\frac{l-\frac{N+1}{2}}{\frac{1}{2}\sqrt{N+1}} \right)=\int_{-\infty}^{-\frac{N-1}{2}}G\left( 0,\sigma=\frac{1}{2}\sqrt{N+1};x\right) \operatorname{d}x+\sum_{i=0}^l\int_{-\frac{N+i}{2}}^{-\frac{N+i}{2}+1}G\left(0,\frac{1}{2}\sqrt{N+1};x \right)\operatorname{d}x.
\ee
The first term of the above expression decays to 0 for $N\rightarrow \infty$, more precisely we have
\be
\label{b1}
\begin{split}
\int_{-\infty}^{-\frac{N-1}{2}}G\left( 0,\sigma=\frac{1}{2}\sqrt{N+1};x\right) \operatorname{d}x&=\frac{2}{\sqrt{2\pi(N+1)}} \int_{-\infty}^{-\frac{N-1}{2}}\operatorname{exp}\left(-\frac{2x^2}{N+1}\right)\operatorname{d}x\\
&=\frac{2}{\sqrt{2\pi(N+1)}} \int_{\frac{N-1}{2}}^{\infty}\operatorname{exp}\left(-\frac{2x^2}{N+1}\right)\operatorname{d}x\\
&\leq \sqrt{\frac{N+1}{2}}\operatorname{exp}\left(-\frac{N+1}{2}\right):=G_1(N),
\end{split}
\ee
since the Hoeffding bound states
\be
\label{Hoeff}
\int_r^{\infty}\operatorname{exp}(-\widetilde{c}^2x^2)\operatorname{d}x\leq \frac{\sqrt{\pi}}{\widetilde{c}}\operatorname{exp}\left(-\widetilde{c}^2r^2\right).
\ee 
Notice that factor $G_1(N)$ decays to 0 for $N\rightarrow \infty$.
Now, due to~\eqref{b1}, we have $\forall \ l=0,\ldots,N+1$
\be
\begin{split}
\left|\operatorname{Pr}(X\leq l)-\Phi\left(\frac{l-\frac{N+1}{2}}{\frac{1}{2}\sqrt{N+1}} \right)  \right|&\leq \left|\operatorname{Pr}(X\leq l)-\sum_{i=0}^l\int_{-\frac{N+i}{2}}^{-\frac{N+i}{2}+1}G\left(0,\frac{1}{2}\sqrt{N+1};x \right)\operatorname{d}x \right|+G_1(N)\\
&= \left|\sum_{i=0}^l \operatorname{Pr}(X=i)-\sum_{i=0}^l\int_{-\frac{N+i}{2}}^{-\frac{N+i}{2}+1}G\left(0,\frac{1}{2}\sqrt{N+1};x \right)\operatorname{d}x  \right|+G_1(N)\\ 
&= \sum_{i=0}^l\left|\operatorname{Pr}(X=i)-\int_{-\frac{N+i}{2}}^{-\frac{N+i}{2}+1}G\left(0,\frac{1}{2}\sqrt{N+1};x \right)\operatorname{d}x \right|+G_1(N).
\end{split}
\ee
Getting back to probability of success in~\eqref{ps}:
\be
\label{ps2}
\begin{split}
 \frac{2}{N+1}\sum_{s=0}^{\frac{N-k}{2}}(2s+1)^2 &\Phi_{\sigma}\left(-\frac{k}{2}-s-\frac{1}{2},-\frac{k}{2}-s+\frac{1}{2} \right)\geq \\
p_s&=\frac{2}{N+1}\sum_{s=0}^{\frac{N-k}{2}}(2s+1)^2\operatorname{Pr}\left(X=\frac{N-k}{2}-s \right)\\
&\geq \frac{2}{N+1}\left(\sum_{s=0}^{\frac{N-k}{2}}(2s+1)^2 \Phi_{\sigma}\left(-\frac{k}{2}-s-\frac{1}{2},-\frac{k}{2}-s+\frac{1}{2} \right)-\sum_{s=0}^{\frac{N-k}{2}}(2s+1)^2\delta_s \right), 
\end{split}
\ee
where 
\be
\delta_s=\left|\operatorname{Pr}\left(x=\frac{N-k}{2}-s\right)-\Phi_{\sigma}\left(-\frac{k}{2}-s-\frac{1}{2},-\frac{k}{2}-s+\frac{1}{2} \right)   \right|,\qquad \sigma=\frac{1}{2}\sqrt{N+1}. 
\ee
The upper bound comes from the Lemma 7.2 from \cite{phdthesis}. Since $X\sim B(N+1,\frac{1}{2})$
\be
\Phi_{\sigma}\left(-\frac{k}{2}-s-\frac{1}{2},-\frac{k}{2}-s+\frac{1}{2} \right)\geq \operatorname{Pr}\left(X=\frac{N-k}{2}-s \right)
\ee
if
\be
\frac{N+1}{2}+1 \leq N+1 - \left(\frac{N-k}{2}-s\right) \ee 
which is equivalent to $k+2s\geq1$ which clearly holds for $k$ and $s$ in question.

At this point we can divide further considerations into two main steps:
\begin{enumerate}
	\item Showing that $\frac{2}{N+1}\sum_{s=0}^{\frac{N-k}{2}}(2s+1)^2\delta_s\rightarrow 0$ for $N\rightarrow \infty$ in~\eqref{ps2}.
	\item Evaluating the first term $ \frac{2}{N+1}\sum_{s=0}^{\frac{N-k}{2}}(2s+1)^2 \Phi_{\sigma}\left(-\frac{k}{2}-s-\frac{1}{2},-\frac{k}{2}-s+\frac{1}{2} \right)$ from  expression~\eqref{ps2} in the asymptotic regime $N\rightarrow \infty$.
\end{enumerate}
{\bf Showing decaying of the second term in~\eqref{ps2}.}\\ Let us focus on the first point. We write
\be
\label{lhs}
\begin{split}
\frac{2}{N+1}\sum_{s=0}^{\frac{N-k}{2}}(2s+1)^2\delta_s=\underbrace{\frac{2}{N+1}\sum_{s=0}^{m(N)}(2s+1)^2\delta_s}_A +\underbrace{\frac{2}{N+1}\sum_{s=m(N)}^{\frac{N-k}{2}}(2s+1)^2\delta_s}_B.
\end{split}
\ee
Substituting $m(N)=(1/2)( b(N)\sqrt{N}-1)$, using fact $\sum_{s=0}^{\frac{N-k}{2}}\delta_s\leq \frac{4c}{\sqrt{N+1}}$ with $c=1/2$, we bound $A$  as
\be
\label{bound_A}
A\leq \frac{2(2m(N)+1)^2}{N+1}\sum_{s=0}^{m(N)}\delta_s\leq \frac{4b^2(N)N}{(N+1)^{3/2}}\leq \frac{4Nb^2(N)}{N^{3/2}}=4N^{-1/4},
\ee
where we choose for example $b(N)=N^{1/8}$ getting demanded asymptotic. For the factor $B$, first we write
\be
\label{boundy}
\begin{split}
\delta_s&\leq \operatorname{Pr}\left(x=\frac{N-k}{2}-s \right)+\Phi_{\sigma}\left(-\frac{k}{2}-s-\frac{1}{2},-\frac{k}{2}-s+\frac{1}{2}  \right)\\
&\leq \operatorname{Pr}\left(x\leq \frac{N-k}{2}-s \right)+\Phi_{\sigma}\left(-\frac{k}{2}-s+\frac{1}{2}  \right)\\
&\leq \operatorname{exp}\left(-\frac{(k+2s+1)^2}{2(N+1)}\right)+ \operatorname{exp}\left(-\frac{(k+2s-1)^2}{2(N+1)}\right)\\
&\leq 2\operatorname{exp}\left(-\frac{(k+2s-1)^2}{2(N+1)}\right).
\end{split}
\ee
Indeed, for $\Phi_{\sigma}\left(-\frac{k}{2}-s+\frac{1}{2}  \right)$ we use $\sigma=\frac{1}{2}\sqrt{N+1}$ and the Hoeffding bound~\eqref{Hoeff}:
\be
\begin{split}
\Phi_{\sigma}\left(-\frac{k}{2}-s+\frac{1}{2} \right)=\frac{1}{\sqrt{2\pi}\sigma}\int_{-\frac{k}{2}-s+\frac{1}{2}}^{\infty}\operatorname{exp}\left(-\frac{1}{2}\frac{x^2}{\sigma^2}\right)\operatorname{d}x&\leq \operatorname{exp}\left(-\frac{1}{2}\frac{(-\frac{k}{2}-s+\frac{1}{2})^2}{\sigma^2}\right)\\
&= \operatorname{exp}\left(-\frac{(k+2s-1)^2}{2(N+1)}\right),
\end{split}
\ee
For $\operatorname{Pr}\left(x\leq \frac{N-k}{2}-s \right)$ we use the Chernoff bound $\operatorname{Pr}(X<(1-\delta )\mu )\leq \operatorname{e}^{-\frac{\mu \delta ^{2}}{2}}$, with $\mu=N+1$ and $1-\delta=\frac{N-k-2s}{N+1}$:
\be
\operatorname{Pr}\left(x\leq \frac{N-k}{2}-s \right)\leq \operatorname{exp}\left(-\frac{(k+2s+1)^2}{2(N+1)}\right).
\ee
Applying~\eqref{boundy} to $B$ we reduce to
\be
\label{B2}
\begin{split}
B\leq \frac{4}{N+1}\sum_{s=m(N)}^{\frac{N-k}{2}}(2s+1)^2\operatorname{exp}\left(-\frac{(k+2s-1)^2}{2(N+1)} \right)&\leq \frac{4(N-k+1)}{N+1}\sum_{s=m(N)}^{\frac{N-k}{2}}\operatorname{exp}\left(-\frac{(k+2s-1)^2}{2(N+1)} \right)\\
&\leq \frac{4(N-k+1)}{N+1}\sum_{s=m(N)}^{\infty}\operatorname{exp}\left(-\frac{(k+2s-1)^2}{2(N+1)} \right)\\
&\leq 4\sum_{s=m(N)}^{\infty}\operatorname{exp}\left(-\frac{(k+2s-1)^2}{2(N+1)} \right)\\
&\leq \frac{11}{\sqrt{N+1}}\operatorname{exp}\left(-\frac{(k+2m(N)-1)^2}{2(N+1)}\right)\\
&\leq 11\sqrt{N+1}\operatorname{exp}\left(-\frac{2m^2(N)}{N+1}\right),
\end{split}
\ee
where in the fourth inequality we use the Hoeffding inequality. Finally, using explicit form of the function $m(N)=(1/2)( N^{5/8}-1)$ and expression~\eqref{bound_A} we bound left-hand side of~\eqref{lhs} as
\be
\label{eq:delta_bound}
\frac{2}{N+1}\sum_{s=0}^{\frac{N-k}{2}}(2s+1)^2\delta_s\leq 4N^{-3/2}+11\sqrt{N+1}\operatorname{exp}\left(-\frac{\left(N^{5/8}-1\right)^2}{N+1}\right)
\ee
which clearly decays to 0 with $N\rightarrow \infty$. It means $p_s$ in~\eqref{ps2} can be bounded from the below 
\be
\begin{split}
p_s&\geq \frac{2}{N+1}\sum_{s=0}^{\frac{N-k}{2}}(2s+1)^2 \Phi_{\sigma}\left(-\frac{k}{2}-s-\frac{1}{2},-\frac{k}{2}-s+\frac{1}{2} \right)-4N^{-1/4}-11\sqrt{N+1}\operatorname{exp}\left(-\frac{\left(N^{5/8}-1\right)^2}{N+1}\right).
\end{split}
\ee
{\bf Estimation of the first term in~\eqref{ps2}.}
We now want to estimate the expression
\be
\label{eq:paski}
\frac{2}{N+1}\left(\sum_{s=0}^{\frac{N-k}{2}}(2s+1)^2
\Phi_{\sigma}\left(-\frac{k}{2}-s-\frac{1}{2},-\frac{k}{2}-s+\frac{1}{2} \right)\right).
\ee
First, we can write
\begin{align}
&\frac{2}{N+1}\left(\sum_{s=0}^{\frac{N-k}{2}}(2s+1)^2
\Phi_{\sigma}\left(-\frac{k}{2}-s-\frac{1}{2},-\frac{k}{2}-s+ \frac{1}{2},\right)\right)  = 
\frac{2}{N+1}\sum_{s=0}^{\frac{N-k}{2}}(2s+1)^2
\int_{-\frac{k}{2}-s-\frac{1}{2}}^{-\frac{k}{2}-s+\frac{1}{2}} \frac{2}{\sqrt{2\pi(N+1)}}\operatorname{e}^{-\frac{2x^2}{N+1}} dx\\
  &\geq  \frac{2}{N+1}\sum_{s=0}^{\frac{N-k}{2}}(2s+1)^2
    \frac{2}{\sqrt{2\pi(N+1)}}\operatorname{e}^{-\frac{(k+2s+1)^2}{2(N+1)}}\\
  &= 2 \frac{2}{\sqrt{N+1}}\sum_{s=0}^{\frac{N-k}{2}}\left(\frac{1}{\sqrt{N+1}} + s \frac{2}{\sqrt{N+1}}\right)^2
    \frac{1}{\sqrt{2\pi}}\operatorname{e}^{-\frac{\left(\frac{1}{\sqrt{N+1}}+s\frac{2}{\sqrt{N+1}} +\frac{k}{\sqrt{N+1}} \right)^2}{2}}.
\end{align}
where we replace integral over interval $[{-\frac{k}{2}-s-\frac{1}{2}},{-\frac{k}{2}-s+\frac{1}{2}}]$ with local minimum of integrated function multiplied by the length of the interval, which is $1$.
One can identify the last expression as a Riemann left-sum of function
\be f^N(z)\coloneqq z^2\operatorname{e}^{\frac{(z+\frac{k}{\sqrt{N+1}})^2}{2}}\ee on the
interval $[\frac{1}{\sqrt{N+1}}, \frac{N-k}{\sqrt{N+1}}]$. We will transform this sum in the manner depicted in Figure \ref{fig:Riemann}. 
Setting $k=a\sqrt{N}$ the target function takes form
\be
f^N(z)=z^2\operatorname{e}^{\frac{(z+a\sqrt{\frac{N}{N+1}})^2}{2}}.
\ee
 One can check, that for
\be
0\leq z\leq m \coloneqq \frac{-\sqrt{\frac{N}{N+1}}a+\sqrt{\frac{N}{N+1}a^2+8}}{2}
\ee it is monotonically increasing, while for
$z\geq m $ it decreases.
Therefore it can be split into two components: one consisting of partitions $\{[\frac{1}{\sqrt{N+1}}, \frac{1+2s}{\sqrt{N+1}}],
s\in 0,\dots,s_m\}$
where $\frac{1+2s_m}{\sqrt{N+1}} \leq m$, and the other consisting of partitions for higher values of $s$. Thus we have
\begin{align}
     \frac{2}{\sqrt{N+1}}\sum_{s=0}^{\frac{N-k}{2}}&\left(\frac{1}{\sqrt{N+1}} + s \frac{2}{\sqrt{N+1}}\right)^2
    \frac{1}{\sqrt{2\pi}}\operatorname{e}^{-\frac{\left(\frac{1}{\sqrt{N+1}}+s\frac{2}{\sqrt{N+1}} +\frac{k}{\sqrt{N+1}}
    \right)^2}{2}} \\
     = &2 \frac{2}{\sqrt{N+1}}\sum_{s=0}^{s_m}\left(\frac{1}{\sqrt{N+1}} + s \frac{2}{\sqrt{N+1}}\right)^2
    \frac{1}{\sqrt{2\pi}}\operatorname{e}^{-\frac{\left(\frac{1}{\sqrt{N+1}}+s\frac{2}{\sqrt{N+1}} +\frac{k}{\sqrt{N+1}}
    \right)^2}{2}}\\
    &+  2 \frac{2}{\sqrt{N+1}}\sum_{s=s_m+1}^{\frac{N-k}{2}}\left(\frac{1}{\sqrt{N+1}} + s \frac{2}{\sqrt{N+1}}\right)^2
    \frac{1}{\sqrt{2\pi}}\operatorname{e}^{-\frac{\left(\frac{1}{\sqrt{N+1}}+s\frac{2}{\sqrt{N+1}} +\frac{k}{\sqrt{N+1}}
    \right)^2}{2}}\\
    =& S_L^N\left(\frac{1}{\sqrt{N+1}}, \frac{1+2(s_m+1)}{\sqrt{N+1}}\right) + S_L^N\left(\frac{1+2(s_m+1)}{\sqrt{N+1}},
    \frac{N-a\sqrt{N}+1}{\sqrt{N+1}}\right),
\end{align}
where $S_L^N\left(\frac{1}{\sqrt{N+1}}, \frac{1+2(s_m+1)}{\sqrt{N+1}}\right)$ denotes left Riemann sum on the interval
$[\frac{1}{\sqrt{N+1}}, \frac{1+(2s_m+1)}{\sqrt{N+1}}]$. Setting $t=s+1$, we can write
\begin{align}
    S_L^N\left(\frac{1}{\sqrt{N+1}}, \frac{2s_m+1}{\sqrt{N+1}}\right)&=2 \frac{2}{\sqrt{N+1}}\sum_{s=0}^{s_m}\left(\frac{1}{\sqrt{N+1}} + s \frac{2}{\sqrt{N+1}}\right)^2
    \frac{1}{\sqrt{2\pi}}\operatorname{e}^{-\frac{\left(\frac{1}{\sqrt{N+1}}+s\frac{2}{\sqrt{N+1}} +\frac{k}{\sqrt{N+1}}
    \right)^2}{2}} \\
    &=2 \frac{2}{\sqrt{N+1}}\sum_{t=1}^{t_m}\left(-\frac{1}{\sqrt{N+1}} + t \frac{2}{\sqrt{N+1}}\right)^2
    \frac{1}{\sqrt{2\pi}}\operatorname{e}^{-\frac{\left(-\frac{1}{\sqrt{N+1}}+t\frac{2}{\sqrt{N+1}} +\frac{k}{\sqrt{N+1}} \right)^2}{2}}\\
    &=S_R^N\left(-\frac{1}{\sqrt{N+1}}, \frac{-1+t_m}{\sqrt{N+1}}\right). 
\end{align}
Thus
\begin{align}
    &S_L^N\left(\frac{1}{\sqrt{N+1}}, \frac{1+2(s_m+1)}{\sqrt{N+1}}\right) + S_L^N\left(\frac{1+2(s_m+1)}{\sqrt{N+1}},
    \frac{N-a\sqrt{N}+1}{\sqrt{N+1}}\right) \label{eq:R1} \\
    &= S_R^N\left(-\frac{1}{\sqrt{N+1}}, \frac{-1+2t_m}{\sqrt{N+1}}\right) + S_L^N\left(\frac{1+2(s_m+1)}{\sqrt{N+1}},
    \frac{N-a\sqrt{N}+1}{\sqrt{N+1}}\right)\label{eq:R2}.
\end{align}
We can observe, that we shifted the first left sum by $-\frac{2}{\sqrt{N+1}}$, which corresponds to single partition
interval, turning it into right sum. Supplementing the upper bound of the "missing middle
term" $\frac{2}{\sqrt{N+1}}m^2 \frac{1}{\sqrt{2\pi}}\operatorname{e}^{-\frac{\left(m+\frac{k}{\sqrt{N+1}} \right)^2}{2}}$, we obtain the
following restriction
\begin{align}
S_R^N\left(-\frac{1}{\sqrt{N+1}}, \frac{-1+2t_m}{\sqrt{N+1}}\right) + S_L^N\left(\frac{1+2(s_m+1)}{\sqrt{N+1}},
    \frac{N-a\sqrt{N}+1}{\sqrt{N+1}}\right) + 
\frac{2}{\sqrt{N+1}}m^2
    \frac{1}{\sqrt{2\pi}}\operatorname{e}^{-\frac{\left(m+\frac{k}{\sqrt{N+1}} \right)^2}{2}}dx \label{eq:R3}\\
    \geq  
   \int_{-\frac{1}{\sqrt{N+1}}}^\frac{N-a\sqrt{N}+1}{\sqrt{N+1}}x^2\frac{1}{\sqrt{2\pi}}\operatorname{e}^{-\frac{(x+a\sqrt{\frac{N}{N+1}})^2}{2}}dx\\
    \geq  
   \int_{\frac{1}{\sqrt{N+1}}}^\frac{N-a\sqrt{N}+1}{\sqrt{N+1}}x^2\frac{1}{\sqrt{2\pi}}\operatorname{e}^{-\frac{(x+a\sqrt{\frac{N}{N+1}})^2}{2}}dx.
\end{align}
These three sums are depicted in Figure \ref{fig:Riemann}.
\begin{figure}[h]
    \centering
    \includegraphics[width=\textwidth]{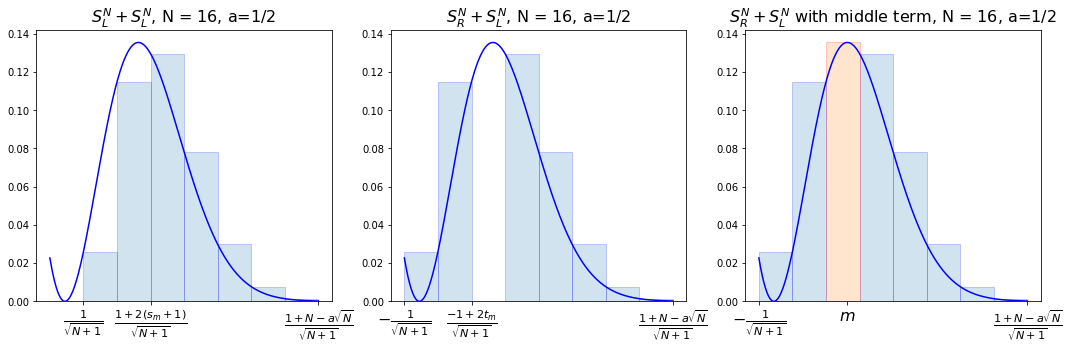}
    \caption{Three Riemann sums: (\ref{eq:R1}), (\ref{eq:R2}) and (\ref{eq:R3}).}
    \label{fig:Riemann}
\end{figure}

Thus the initial Riemann sum (\ref{eq:R1}) is bounded by
\begin{align}
     \frac{2}{\sqrt{N+1}} \sum_{s=0}^{\frac{N-k}{2}}&\left(\frac{1}{\sqrt{N+1}} + s \frac{2}{\sqrt{N+1}}\right)^2
    \frac{1}{\sqrt{2\pi}}\operatorname{e}^{-\frac{\left(\frac{1}{\sqrt{N+1}}+s\frac{2}{\sqrt{N+1}} +\frac{k}{\sqrt{N+1}}
    \right)^2}{2}} \\
    &= S_L^N\left(\frac{1}{\sqrt{N+1}}, \frac{2s_m+3}{\sqrt{N+1}}) + S_L^N(\frac{2s_m+3}{\sqrt{N+1}},
    \frac{N-a\sqrt{N}+1}{\sqrt{N+1}}\right) \\
    & = S_R^N\left(-\frac{1}{\sqrt{N+1}}, \frac{2s_m+1}{\sqrt{N+1}}) + S_L^N(\frac{2s_m+3}{\sqrt{N+1}},
    \frac{N-a\sqrt{N}+1}{\sqrt{N+1}}\right)\\
    & \geq
    \int_{\frac{1}{\sqrt{N+1}}}^\frac{N-a\sqrt{N}+1}{\sqrt{N+1}}x^2\frac{1}{\sqrt{2\pi}}\operatorname{e}^{-\frac{(x+a\sqrt{\frac{N}{N+1}})^2}{2}}dx
    - \frac{2}{\sqrt{N+1}}m^2
    \frac{1}{\sqrt{2\pi}}\operatorname{e}^{\frac{\left(m+\frac{k}{\sqrt{N+1}} \right)^2}{2}} \\
& \geq
    \int_{\frac{1}{\sqrt{N+1}}}^\frac{N-a\sqrt{N}+1}{\sqrt{N+1}}x^2\frac{1}{\sqrt{2\pi}}\operatorname{e}^{-\frac{(x+a\sqrt{\frac{N}{N+1}})^2}{2}}dx
    - \frac{2}{\sqrt{N+1}}m^2
    \frac{1}{\sqrt{2\pi}}\operatorname{e}^{\frac{m^2}{2}} \\
& \geq
    \int_{\frac{1}{\sqrt{N+1}}}^\frac{N-a\sqrt{N}+1}{\sqrt{N+1}}x^2\frac{1}{\sqrt{2\pi}}\operatorname{e}^{-\frac{(x+a\sqrt{\frac{N}{N+1}})^2}{2}}dx
    - \frac{4}{\operatorname{e}\sqrt{N+1}}
    \frac{1}{\sqrt{2\pi}}\\
& \geq
    \int_{\frac{1}{\sqrt{N+1}}}^\frac{N-a\sqrt{N}+1}{\sqrt{N+1}}x^2\frac{1}{\sqrt{2\pi}}\operatorname{e}^{-\frac{(x+a\sqrt{\frac{N}{N+1}})^2}{2}}dx
    - \frac{2}{\sqrt{N+1}}
    \frac{1}{\sqrt{2\pi}}
\end{align}
and thus (\ref{eq:paski}) admits the following restriction
\begin{align}
\frac{2}{N+1}&\left(\sum_{s=0}^{\frac{N-k}{2}}(2s+1)^2\Phi_{\sigma}\left(-\frac{k}{2}-s-\frac{1}{2},-\frac{k}{2}-s+\frac{1}{2} \right)\right)\\
    & \geq    2 \frac{2}{\sqrt{N+1}} \sum_{s=0}^{\frac{N-k}{2}}\left(\frac{1}{\sqrt{N+1}} + s \frac{2}{\sqrt{N+1}}\right)^2
    \frac{1}{\sqrt{2\pi}}\operatorname{e}^{-\frac{\left(\frac{1}{\sqrt{N+1}}+s\frac{2}{\sqrt{N+1}} +\frac{k}{\sqrt{N+1}}
    \right)^2}{2}} \\
    &\geq
    2\left(\int_{\frac{1}{\sqrt{N+1}}}^\frac{N-a\sqrt{N}+1}{\sqrt{N+1}}x^2\frac{1}{\sqrt{2\pi}}\operatorname{e}^{-\frac{(x+a\sqrt{\frac{N}{N+1}})^2}{2}}dx
    - 
     \frac{2}{\sqrt{N+1}}
    \frac{1}{\sqrt{2\pi}}\right).
\end{align}
Equivalent reasoning can be used to obtain the upper bound of (\ref{eq:paski}).
Thus, for finite $N$ one obtains
\begin{align}
\displaystyle2\int_{\frac{1}{\sqrt{N+1}}}^\frac{N-a\sqrt{N}+1}{\sqrt{N+1}}x^2\frac{1}{\sqrt{2\pi}}\operatorname{e}^{-\frac{(x+a\sqrt{\frac{N}{N+1}})^2}{2}}dx+ 2M \geq p_{succ} \geq 
    2\int_{\frac{1}{\sqrt{N+1}}}^\frac{N-a\sqrt{N}+1}{\sqrt{N+1}}x^2\frac{1}{\sqrt{2\pi}}\operatorname{e}^{-\frac{(x+a\sqrt{\frac{N}{N+1}})^2}{2}}dx
    -2M-\sum_{s=0}^{\frac{N-k}{2}}(2s+1)^2\delta_s,
\end{align}
where $M = \frac{2}{\sqrt{N+1}}\frac{1}{\sqrt{2\pi}}$. This bound can be further refined. On the one hand, clearly
\be
\displaystyle\int_{\frac{1}{\sqrt{N+1}}}^\frac{N-a\sqrt{N}+1}{\sqrt{N+1}}x^2\frac{1}{\sqrt{2\pi}}\operatorname{e}^{-\frac{(x+a\sqrt{\frac{N}{N+1}})^2}{2}}dx+ M \leq \displaystyle\int_0^\infty x^2\frac{1}{\sqrt{2\pi}}\operatorname{e}^{-\frac{(x+a\sqrt{\frac{N}{N+1}})^2}{2}}dx+ M.
\ee
On the other hand
\be
\begin{split}
    &\int_{\frac{1}{\sqrt{N+1}}}^\frac{N-a\sqrt{N}+1}{\sqrt{N+1}} x^2\frac{1}{\sqrt{2\pi}}\operatorname{e}^{-\frac{(x+a\sqrt{\frac{N}{N+1}})^2}{2}}dx
    -M-\sum_{s=0}^{\frac{N-k}{2}}(2s+1)^2\delta_s = \\
    &\int_0^\infty x^2\frac{1}{\sqrt{2\pi}}\operatorname{e}^{-\frac{(x+a\sqrt{\frac{N}{N+1}})^2}{2}}dx
    -M-\sum_{s=0}^{\frac{N-k}{2}}(2s+1)^2\delta_s - \int_0^{\frac{1}{\sqrt{N+1}}}x^2\frac{1}{\sqrt{2\pi}}\operatorname{e}^{-\frac{(x+a\sqrt{\frac{N}{N+1}})^2}{2}}dx - \int_\frac{N-a\sqrt{N}+1}{\sqrt{N+1}}^\infty x^2\frac{1}{\sqrt{2\pi}}\operatorname{e}^{-\frac{(x+a\sqrt{\frac{N}{N+1}})^2}{2}}dx=\\
&\int_0^\infty x^2\frac{1}{\sqrt{2\pi}}\operatorname{e}^{-\frac{(x+a\sqrt{\frac{N}{N+1}})^2}{2}}dx
    -M-\sum_{s=0}^{\frac{N-k}{2}}(2s+1)^2\delta_s - I_1-I_2.
\end{split}
\ee
One has
\be
I_1 = \int_0^{\frac{1}{\sqrt{N+1}}}x^2\frac{1}{\sqrt{2\pi}}\operatorname{e}^{-\frac{(x+a\sqrt{\frac{N}{N+1}})^2}{2}}dx \leq \int_0^{\frac{1}{\sqrt{N+1}}}x^2\frac{1}{\sqrt{2\pi}}\operatorname{e}^{-\frac{x^2}{2}}dx\leq \int_0^{\frac{1}{\sqrt{N+1}}} \frac{1}{N+1}\frac{1}{\sqrt{2\pi}}dx = (N+1)^{-\frac{3}{2}}\frac{1}{\sqrt{2\pi}}
\ee
and having definition of  the complementary error function $\operatorname{erfc}(x):=1-\operatorname{erf}(x)$, where $\operatorname{erf}(x):=\frac{2}{\sqrt{\pi}}\int_0^x \operatorname{e}^{-t^2}dt$ is the error function we write
\be
    \begin{split}
I_2 = \int_\frac{N-a\sqrt{N}+1}{\sqrt{N+1}}^\infty x^2\frac{1}{\sqrt{2\pi}}\operatorname{e}^{-\frac{(x+a\sqrt{\frac{N}{N+1}})^2}{2}}dx
    \leq  \int_\frac{N-a\sqrt{N}+1}{\sqrt{N+1}}^\infty x^2\frac{1}{\sqrt{2\pi}}\operatorname{e}^{-\frac{x^2}{2}}dx \coloneqq
    \int_b^\infty x^2\frac{1}{\sqrt{2\pi}}\operatorname{e}^{-\frac{x^2}{2}}dx = \\
        \frac{b \operatorname{e}^ {-\frac{b^2}{2}}}{\sqrt{2\pi}} + \frac{1}{2}\operatorname{erfc}\left(\frac{b}{\sqrt{2}}\right) =
        \frac{b \operatorname{e}^ {-\frac{b^2}{2}}}{\sqrt{2\pi}} + \frac{1}{\sqrt{\pi}}\int_\frac{b}{\sqrt{2}}^\infty
        \operatorname{e}^{-\frac{x^2}{2}}dx \leq \frac{b \operatorname{e}^ {-\frac{b^2}{2}}}{\sqrt{2\pi}} + \operatorname{e}^{-\frac{b^2}{2}}, \label{eq:Hoeffiding}\\
    \end{split}
\ee
    where in (\ref{eq:Hoeffiding}) we use Hoeffiding bound, i.e. $\int_t^\infty \operatorname{e}^{-cx^2} dx \leq \frac{\sqrt{\pi}}{c}
    \operatorname{e}^{-c^2t^2}$ and $b=\frac{N+1-a\sqrt{N}}{\sqrt{N+1}}=\sqrt{N+1}- a\sqrt{\frac{N}{N+1}}$. Furthermore
\be
    \begin{split}
        \frac{b \operatorname{e}^ {-\frac{b^2}{2}}}{\sqrt{2\pi}} + \operatorname{e}^{-\frac{b^2}{2}} = \left(\frac{\left(\sqrt{N+1}-
        a\sqrt{\frac{N}{N+1}}\right)}{\sqrt{2\pi}} + 1\right)\operatorname{e}^{-\frac{\left(\sqrt{N+1}-
        a\sqrt{\frac{N}{N+1}}\right)^2}{2}} \leq \\ \left(\frac{\left(\sqrt{N+1}-
        a\sqrt{\frac{N}{N+1}}\right)}{\sqrt{2\pi}} + 1\right)\operatorname{e}^{-\frac{\left(\sqrt{N+1}-
        1\right)^2}{2}} \leq \left(\frac{\sqrt{N+1}
        }{\sqrt{2\pi}} + 1\right)\operatorname{e}^{-\frac{\left(\sqrt{N+1}-
        1\right)^2}{2}}.
    \end{split}
\ee

Finally, taking into account (\ref{eq:delta_bound}), we can for a given $N$ and $k=a\sqrt{N}, a\in (0,1)$ formulate the following lower and upper bounds for $p_{succ}$:


\begin{align}
&2\left(\int_0^\infty x^2\frac{1}{\sqrt{2\pi}}\operatorname{e}^{-\frac{(x+a\sqrt{\frac{N}{N+1}})^2}{2}}dx+ \frac{2}{\operatorname{e}\sqrt{N+1}}\frac{1}{\sqrt{2\pi}}\right) \geq p_{succ} \geq\label{eq:bnd1}\\
 &2\left(\int_0^\infty x^2\frac{1}{\sqrt{2\pi}}\operatorname{e}^{-\frac{(x+a\sqrt{\frac{N}{N+1}})^2}{2}}dx - \frac{2}{\operatorname{e}\sqrt{N+1}}\frac{1}{\sqrt{2\pi}} - \left(\frac{\sqrt{N+1}
        }{\sqrt{2\pi}} + 1\right)\operatorname{e}^{-\frac{\left(\sqrt{N+1}-
        1\right)^2}{2}} - (N+1)^{-\frac{3}{2}}\frac{1}{\sqrt{2\pi}}\right)+\\ &-\frac{2}{N+1}\left(4N^{-1/4}+11\sqrt{N+1}\operatorname{e}^{-\frac{\left(N^{5/8}-1\right)^2}{N+1}}\right).\label{eq:bnd3}
\end{align}
\end{proof}

\bibliographystyle{plainnat}
\bibliography{biblio}
\end{document}